\def\Riemann{\mathcal{R}}
\newcommand{\R}{\mathbb{R}}
\newcommand{\C}{\mathbb{C}}
\newcommand{\lfs}{\mathbf{f}}
\newcommand{\dnn}{\mathrm{N}} 
\newcommand{\nn}{\mathop{\mathrm{NN}}} 
\newcommand{\dist}{\mathbf{d}}
\newcommand{\nndist}{\mathbf{d}_{\mathrm{N}}} 
\newcommand{\sqdist}{\mathbf{d}_{\mathrm{sq}}} 
\newcommand{\cdist}{\mathbf{d}_{\mathrm{c}}} 
\newcommand{\nnlen}{\mathbf{\ell}_{\mathrm{N}}}
\newcommand{\sqlen}{\mathbf{\ell}_{\mathrm{sq}}}
\newcommand{\len}{\mathbf{\ell}}
\newcommand{\vor}{\mathrm{Vor}}
\newcommand{\del}{\mathrm{Del}}
\newcommand{\srad}{u}
\newcommand{\corref}[1]{Corollary~\ref{#1}}
\newcommand{\secref}[1]{Section~\ref{#1}}
\newtheorem{theorem}{Theorem}[section]
\newtheorem{lemma}[theorem]{Lemma}
\newtheorem{corollary}{Corollary}
\newtheorem{definition}{Definition}[section]
\newtheorem*{theorem*}{Theorem}
\newtheorem*{lemma*}{Lemma}
\newcommand{\shortversion}[1]{}
\definecolor{gray}{rgb}{0.7,0.7,0.7}
\definecolor{ForestGreen}{rgb}{0.1333,0.5451,0.1333}
\title{Approximating Nearest Neighbor Distances\thanks{Partially supported by the NSF grant CCF-1065106.}}
\author{
  Michael B. Cohen\thanks{Massachusetts Institute of Technology, {\tt micohen@mit.edu}}
  \and
  Brittany Terese Fasy\thanks{Department of Computer Science,Tulane University, {\tt brittany.fasy@alumni.duke.edu}}
  \and
  Gary L.~Miller\thanks{Department of Computer Science, Carnegie Mellon University, {\tt glmiller@cs.cmu.edu}}
  \and
  Amir Nayyeri\thanks{School of electrical engineering and computer science, Oregon State University, {\tt nayyeria@eecs.oregonstate.edu}}
 \and
  Donald R. Sheehy\thanks{Computer Science and Engineering Department, University of Connecticut, {\tt don.r.sheehy@gmail.com}}
  \and
  Ameya Velingker\thanks{Department of Computer Science, Carnegie Mellon University, {\tt avelingk@cs.cmu.edu}}
}
\begin{document}

\maketitle

\begin{abstract}
Several researchers proposed using  non-Euclidean metrics on point sets in 
Euclidean space for clustering noisy data.  Almost always, a distance function 
is desired that recognizes the closeness of the points in the same cluster, even 
if the Euclidean cluster diameter is large.  Therefore, it is preferred to assign 
smaller costs to the paths that stay close to the input points.

In this paper, we consider the most natural metric with this property, which we 
call the nearest neighbor metric.  Given a point set P and a path $\gamma$, our 
metric charges each point of $\gamma$ with its distance to P.  The total charge 
along $\gamma$ determines its nearest neighbor length, which is formally defined 
as the integral of the distance to the input points along the curve.  
We describe a $(3+\varepsilon)$-approximation algorithm  and a $(1+\varepsilon)$-approximation algorithm to compute the nearest neighbor 
metric.  Both approximation algorithms work in near-linear time.  
The former uses shortest paths on a sparse graph using only the input points.
The latter uses a sparse sample of the ambient space, to find good approximate geodesic paths.
\end{abstract}

\section{Introduction}
\label{sec:intro}
Many problems lie at the interface of computational geometry, machine learning,
and data analysis, including--but not limited to: clustering, manifold learning,
geometric inference, and nonlinear dimensionality reduction.  
Although the input to these problems is often a Euclidean point cloud, a 
different distance measure may be more \emph{intrinsic} to the data.
In particular, we are interested in a distance that recognizes the 
closeness of two points in the
same cluster, even if their Euclidean distance is large, and conversely,
recognizes a large distance between points in different clusters, even if the
Euclidean distance is small.  For example, in Figure~\ref{fig:dbd_0_w_arrows}, the distance between $a$ and $b$ must be larger than the distance between $b$ and $c$.

\begin{figure}[htbp]
  \centering
    \includegraphics[width=0.6\textwidth]{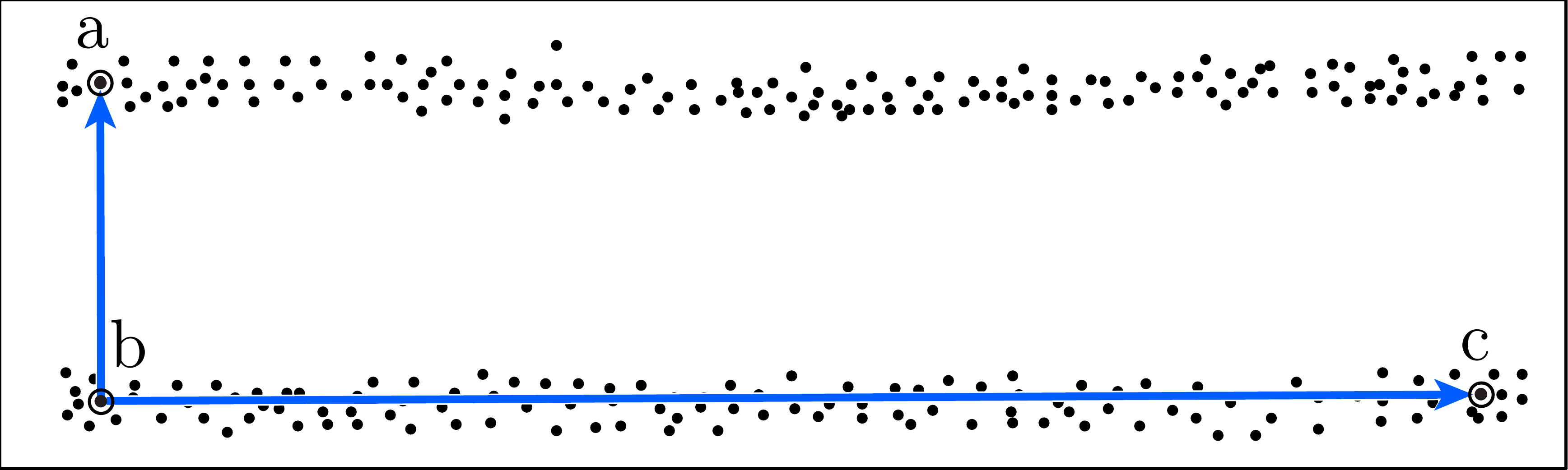}
  \caption{The intrinsic density-based distance can differ from the Euclidean distance.}
  \label{fig:dbd_0_w_arrows}
\end{figure}

There are at least two seemingly different approaches to define a non-Euclidean metric
on a finite set of point on $\R^d$. 
The first approach is to form a graph metric on the point set. An example of
such a graph is the $k$th nearest neighbor graph where a point is only connected to another point
if one is a $k$th nearest neighbor of the other. The edge weights may be a constant  or the Euclidean distances.
In this paper we consider the complete graph but the edge lengths are a power of their Euclidean lengths.
We are particularly interested in the squared length, which we will refer to as the edge-squared metric.

The second approach is to endow all of $\R^d$ with a new metric.
We start with a cost function $c:\R^d \rightarrow \R$, which takes the point cloud into account.
Then, the length of a path $\gamma$ is the integral of the cost function along
the path.
\begin{eqnarray}\label{eqn:costLength}
\ell_c(\gamma) = \int_{\gamma}{c(s)ds} =
\int_0^1{c(\gamma(t))\left|\frac{d\gamma}{dt}(t)\right|dt}.
\end{eqnarray}
The distance between two points $x,y\in \R^d$ is then the length of the shortest path between
them:
\begin{eqnarray}\label{eqn:costDist}
\cdist(x,y) = \inf_{\gamma}{\ell_c(\gamma)} .
\end{eqnarray}
Note that the constant function, $c(x) = 1$ for all $x \in \R^d$, gives the
Euclidean metric; whereas, other functions allow
space to be stretched in various ways.


In almost all applications mentioned above for cost-based metrics, in order to reinforce paths within 
clusters,  one would like to assign smaller lengths to paths that stay close to 
the point cloud.  
Therefore, the simplest natural cost function on $\R^d$ is the distance to the 
point cloud.   
More precisely, given a finite point set $P$ the cost $c(x)$ for $x \in \R^d$ is chosen
to be $\dnn(x)$, the Euclidean distance from $x$ to the nearest point in $P$.
The \emph{nearest neighbor length} ($\dnn$-length) $\ell_\dnn (\gamma)$ of a
curve is given by~\eqref{eqn:costLength}, where we set $c(x) = \dnn(x)$ for
all points $x\in \mathcal{C}$.  We refer to the corresponding metric given
by~\eqref{eqn:costDist} as the \emph{nearest neighbor  metric} or simply the
$\dnn$-distance.


In this paper, we investigate approximation algorithms for $\dnn$-distance computation.  
We describe a $(3+\varepsilon)$-approximation algorithm and a $(1+\varepsilon)$-approximation algorithm.
The former comes from comparing the nearest neighbor metric with the edge-squared metric.
The latter is a tighter approximation that samples the ambient space to find good approximate geodesics.

\subsection{Overview}
In Section~\ref{sec:approximation}, we describe a constant factor approximation algorithm obtained via an elegant reduction into the edge-squared metric introduced by~\cite{bijral11semiSupLearningDBD} and~\cite{vincent04densitySens}. 
This metric is defined 
between pairs of points in $P$ by considering the graph distance on a
complete weighted graph, where the weight of each edge is the square
of its Euclidean length.  
We show that the $\dnn$-distance and edge-squared metric are equivalent up to a factor of three (after a scaling by a factor of four).
As a result, because spanners for the edge-squared metric can be computed in nearly linear time~\cite{lukovszki06resource}, we obtain a $(3+\varepsilon)$-approximation algorithm for computing $\dnn$-distance.

\begin{theorem}
\label{thm:threeApx}
Let $P$ be a set of points in $\R^d$, and let $x,y \in P$.  The nearest neighbor distance between $x$ and $y$ can be approximated within a $(3+\varepsilon)$ factor in $O(n\log n+ n\varepsilon^{-d})$ time, for any $0 < \varepsilon \leq 1$.
\end{theorem}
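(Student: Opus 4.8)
The plan is to reduce the computation of $\nndist(x,y)$ to shortest paths in the \emph{edge-squared metric} $\sqdist$ on $P$ --- the shortest-path distance in the complete graph on $P$ whose edge $pq$ carries weight $|pq|^2$ --- by establishing the two-sided comparison
\[
\nndist(x,y)\;\le\;\tfrac14\,\sqdist(x,y)\;\le\;3\,\nndist(x,y)\qquad\text{for all }x,y\in P,
\]
and then outputting one quarter of a $(1+\e/3)$-approximation of $\sqdist(x,y)$ computed by Dijkstra on a sparse spanner of $\sqdist$. The two inequalities are the mathematical content; the algorithmic part is then routine bookkeeping.

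For the left inequality I would take any graph path $x=p_0,p_1,\dots,p_m=y$ and concatenate the straight segments $\overline{p_{i-1}p_i}$ into a curve $\gamma^\star$. On $\overline{p_{i-1}p_i}$, a point at arclength $t$ from $p_{i-1}$ has $\dnn(\cdot)\le\min\{t,\,|p_{i-1}p_i|-t\}$ since $p_{i-1},p_i\in P$, so $\ell_\dnn(\overline{p_{i-1}p_i})\le\int_0^{|p_{i-1}p_i|}\min\{t,|p_{i-1}p_i|-t\}\,dt=|p_{i-1}p_i|^2/4$; summing over $i$ and taking the infimum over graph paths yields $\nndist(x,y)\le\tfrac14\sqdist(x,y)$.

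The right inequality, $\sqdist(x,y)\le 12\,\nndist(x,y)$, is where the real work lies and is the step I expect to be the main obstacle. It says that \emph{every} rectifiable curve $\gamma$ from $x$ to $y$ can be rounded to a graph path of edge-squared cost at most $12\,\ell_\dnn(\gamma)$; taking the infimum over $\gamma$ then finishes. Parametrizing by arclength and setting $f(t)=\dnn(\gamma(t))$ (which is $1$-Lipschitz with $f(0)=f(L)=0$), I would choose sample times $0=t_0<t_1<\dots<t_K=L$ greedily --- advancing from $t_j$ only until the curve has moved a distance comparable to $f(t_j)$, so that $f$ neither shrinks nor grows by more than a constant factor on $[t_j,t_{j+1}]$ --- and set $q_j$ to be a nearest point of $P$ to $\gamma(t_j)$, so $q_0=x$ and $q_K=y$. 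Then $|q_{j-1}q_j|\le f(t_{j-1})+(t_j-t_{j-1})+f(t_j)=O(f(t_{j-1}))$, hence $|q_{j-1}q_j|^2=O(f(t_{j-1})^2)$, while the selection rule forces $\int_{t_{j-1}}^{t_j}f=\Omega(f(t_{j-1})^2)$; since the intervals are disjoint, summing bounds $\sum_j|q_{j-1}q_j|^2$ by a constant times $\int_0^Lf=\ell_\dnn(\gamma)$. The delicate points --- and the reason a naive ``use the site of every Voronoi cell the curve visits'' argument fails --- are that a curve can thread through arbitrarily many Voronoi cells while staying at nearly-constant distance from $P$ (so per-cell rounding overcounts badly) and must therefore be coalesced, and that the first and last steps as well as steps truncated at $t=L$ (where $f$ hits $0$) need separate, more careful estimates; tuning the sampling thresholds and this bookkeeping is exactly what brings the constant down to $12$, equivalently, what makes $\sqdist$ and $4\,\nndist$ agree within a factor of $3$.

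Granting the comparison, the algorithm is immediate. Given $0<\e\le1$, set $\e'=\e/3$; by~\cite{lukovszki06resource} build in $O(n\log n+n(\e')^{-d})=O(n\log n+n\e^{-d})$ time a $(1+\e')$-spanner $H$ of the edge-squared metric with $O(n\e^{-d})$ edges, then run Dijkstra from $x$ in $H$ (Fibonacci heaps) in $O(n\e^{-d}+n\log n)$ time to get $D$ with $\sqdist(x,y)\le D\le(1+\e')\sqdist(x,y)$, and output $A=D/4$. By the comparison, $\nndist(x,y)\le\tfrac14\sqdist(x,y)\le\tfrac14 D\le(1+\e')\tfrac14\sqdist(x,y)\le(1+\e')\cdot 3\,\nndist(x,y)=(3+\e)\,\nndist(x,y)$, so $A$ is a $(3+\e)$-approximation (and $A=0$ if $x=y$); the total time is $O(n\log n+n\e^{-d})$.
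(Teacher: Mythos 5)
Your proposal matches the paper's proof essentially step for step: both establish $\nndist(x,y)\le\tfrac14\sqdist(x,y)\le 3\nndist(x,y)$ (the upper bound by rounding graph paths to straight-segment curves and integrating $\min\{t,|p_{i-1}p_i|-t\}$, the lower bound by a greedy arclength subdivision of a near-optimal curve whose step sizes scale with $\dnn$ and whose endpoints are snapped to nearest input points), and both then invoke the Lukovszki et al.\ result that a $(1+\e')$-Euclidean spanner is a near-optimal spanner for the edge-squared metric, followed by Dijkstra. The paper pins the lower-bound constant to $12$ via the specific symmetric breaking rule $\len(\gamma[t_{i-1},t_i])=\tfrac12(\dnn(\gamma(t_{i-1}))+\dnn(\gamma(t_i)))$ together with the Lipschitz bound $\nnlen\ge s\,\dnn-s^2/2$, which is precisely the ``tuning of thresholds'' you flagged as the remaining bookkeeping.
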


In Section~\ref{sec:approximation_schemes}, we describe a $(1+\varepsilon)$-approximation algorithm for the
$\dnn$-distance that works in time $\varepsilon^{-O(d)}n\log n$.
Our algorithm computes a discretization of the space for points that are
sufficiently far from $P$.  Nevertheless, the sub-paths that are close to $P$
are computed exactly.
We can adapt our algorithm to work for any Lipschitz cost function that is
bounded away from zero; thus, the algorithm can be applied to many of the
scenarios describe in Appendix~\ref{apx:related}.

\begin{theorem}
\label{thm:ptas}
For any finite set of points $P \subset \R^d$ and any fixed number $0<\varepsilon<
1$, the shortest $\dnn$-distance between any pair of points of the space can be
$(1+\varepsilon)$-approximated in time $O(\varepsilon^{-O(d)}n\log n)$.
\end{theorem}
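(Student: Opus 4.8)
The plan is to reduce the problem to a single shortest-path computation on a sparse geometric graph that discretizes the ambient space. Concretely, I would construct a weighted graph $G=(S,E,w)$ whose vertex set $S$ contains $P$, the two query points, and a set of Steiner points, with $|S|+|E| = \varepsilon^{-O(d)}n$, which is computable in $O(\varepsilon^{-O(d)}n\log n)$ time and satisfies $(1-\varepsilon)\cdist(u,v)\le d_G(u,v)\le (1+\varepsilon)\cdist(u,v)$ for all $u,v\in S$, where $d_G$ is the graph metric and $\cdist$ is the nearest neighbor metric. Given such a $G$, one run of Dijkstra's algorithm finishes in $O(|E|+|S|\log|S|)=O(\varepsilon^{-O(d)}n\log n)$ time and proves the theorem. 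Each edge $\{u,v\}$ is assigned weight $w(\{u,v\})$ equal to a $(1+\varepsilon)$-approximation of $\ell_\dnn(\overline{uv})$, the $\dnn$-length of the segment; since every edge then represents an actual curve of essentially that length, the lower bound $d_G\ge(1-\varepsilon)\cdist$ is immediate, and the whole content of the proof is the upper bound: every (near-)geodesic can be \emph{rounded} onto a path of $G$ at $(1+\varepsilon)$ multiplicative cost.

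Two elementary facts drive the construction. First, $\dnn$ is $1$-Lipschitz, so on a ball of radius $t\le\varepsilon\,\dnn(z)$ around $z$ one has $\dnn\in[(1-\varepsilon)\dnn(z),(1+\varepsilon)\dnn(z)]$; hence a segment of Euclidean length at most $\varepsilon\,\dnn(z)$ has $\dnn$-length within a $(1\pm\varepsilon)$ factor of (its Euclidean length)$\cdot\dnn(z)$, and substituting a nearby segment between Steiner points changes its cost by a $(1+O(\varepsilon))$ factor. Second, the scale $\dnn(\cdot)$ varies geometrically across space, so a Whitney-type mesh — Steiner points spaced $\Theta(\varepsilon\,\dnn(\cdot))$ — can be organized along a compressed quadtree (or net-tree) on $P$: each cell $C$ of diameter $\ell$ owns the portion of the shell $\{x:\dnn(x)=\Theta(\ell)\}$ near $C$, which is covered by $\varepsilon^{-O(d)}$ Steiner points at spacing $\Theta(\varepsilon\ell)$, and I join two Steiner points by an edge whenever their distance is $O(\varepsilon)$ times the local scale (within a cell and between geometrically neighboring cells), keeping degrees $\varepsilon^{-O(d)}$. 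A compressed quadtree has $O(n)$ cells and is built in $O(n\log n)$ time, so this already gives the claimed size and running-time bounds; and applying the Lipschitz rounding fact segment by segment along a geodesic and telescoping the errors shows that any geodesic staying in the meshed region is approximated within $(1+\varepsilon)$.

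The genuine difficulty, and the step I expect to be the main obstacle, is the region immediately around $P$: there $\dnn\to 0$, so a Whitney mesh would need infinitely many ever-finer samples, yet a geodesic genuinely wants to dive toward $P$ (travel there is almost free) and may even hug a thin neighborhood of $P$ for a long Euclidean length, so this region cannot simply be discarded. I would handle the "diving" part exactly: inside the ball around a point $p\in P$ of radius comparable to the distance from $p$ to its nearest other point of $P$, one has $\dnn(x)=|x-p|$, the metric is radial, and the geodesic from $p$ to any point $z$ of that ball is the radial segment with $\dnn$-length $\int_0^{|z-p|}t\,dt=\tfrac12|z-p|^2$; so I add, for each $p$, exact "launch" edges from $p$ to the $\varepsilon^{-O(d)}$ mesh points surrounding $p$ at the appropriate scale, with these closed-form quadratic weights, and I truncate the Whitney mesh per cell at scale $\Theta(\varepsilon\ell)$. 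What remains to be argued — the crux — is that the sub-arcs of a true geodesic along which $\dnn$ is tiny are harmless: such a sub-arc either stays near a single point of $P$ (then it is captured by the launch edges) or it crosses a "gap" in $P$, which forces $\dnn$ up to a scale $\Theta(\ell)$ for some quadtree cell $C$ of diameter $\ell$ and is therefore captured, within $(1+\varepsilon)$, by the mesh at that scale; and the number of such transitions along the whole geodesic is bounded (each scale is crossed a bounded number of times), so the per-arc errors compose to a single $(1+\varepsilon)$. Establishing this last reroute rigorously — that a tiny-$\dnn$ sub-arc can always be rerouted through a point of $P$ and the scale-$\ell$ mesh at only a $(1+\varepsilon)$ loss, with the comparison to the edge-squared metric of \thmref{thm:threeApx} controlling the cost across gaps — is where the bulk of the technical work lies. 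Assembling the far-region mesh, the exact launch edges, and this reroute gives $d_G(x,y)\le(1+\varepsilon)\cdist(x,y)$, completing the proof.
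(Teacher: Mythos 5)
Your high-level plan — a Steiner mesh at spacing $\Theta(\varepsilon\,\dnn(\cdot))$ away from $P$, a hole punched around each $p\in P$ that is handled specially, and a single Dijkstra run — matches the paper's approach in outline, and you correctly locate the difficulty in the hole region. But the step you flag as the main obstacle is in fact a genuine gap, not a technicality: rerouting a geodesic sub-arc that dips into the hole around $p$ either through $p$ itself (a launch edge) or through the mesh just outside the hole does \emph{not} give a $(1+\varepsilon)$ factor, regardless of the hole radius $\rho$. Put $p$ at the origin and let the sub-arc enter and exit the hole at $z_1,z_2$ with $|z_1|=|z_2|=\rho$ and angular separation $\Delta\theta<\pi/2$. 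Via the quadratic map $z\mapsto z^2/2$, the exact $\dnn$-length of this sub-arc is $\tfrac12|z_1^2-z_2^2|=\rho^2\sin\Delta\theta$. The launch-edge route through $p$ costs $\tfrac12|z_1|^2+\tfrac12|z_2|^2=\rho^2$, a ratio of $1/\sin\Delta\theta$; the best path that stays outside the hole is the circular arc of radius $\rho$, costing $\rho^2\Delta\theta$, a ratio of $\Delta\theta/\sin\Delta\theta$. The smaller of these two ratios is maximized at $\Delta\theta=1$ radian, where both equal $1/\sin 1\approx 1.19$. This $\approx 19\%$ loss is scale-invariant, so shrinking the hole does not help, and it cannot be charged against the rest of the path, since the geodesic between the query points may consist essentially of this one sub-arc.

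What your construction is missing is the paper's Type~1 edges: for \emph{every pair} of Steiner points $s_1,s_2$ lying in the same Voronoi in-ball $B(p_i,r_i)$, the approximation graph gets an edge weighted by the \emph{exact} $\nndist(s_1,s_2)$, computed in closed form from Corollary~\ref{cor:onecell} — the same Riemann-surface picture you invoke, but applied to arbitrary two-point queries near $p_i$, not only to the radial launch segments $p_i\to s$. These exact edges capture precisely the ``dive past $p$ without touching it'' geodesics that defeat the reroute argument. With them in place, the hole can in fact be taken large — the paper uses radius $(1-\delta^{2/3})r_i$, nearly the whole in-ball — so the mesh only covers the region outside the in-balls, and Lemma~\ref{lem:approxGraph} glues the mesh approximation to the exact in-ball edges. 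Your Lipschitz segment-by-segment error control in the far region is the same in spirit as the paper's Lemmas~\ref{lem:segUpperBound}--\ref{lem:SDistGammaDelta}, and your compressed-quadtree bookkeeping versus the paper's well-spaced-points meshing is an implementation difference; the substantive fix your proposal needs is the exact two-point in-ball edges.
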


\section{Related Work}
\label{apx:related}

Computing the distance between a pair of points with respect to a cost function
encompasses several significant problems that have been considered by different
research communities for at least a few centuries.  As early as 1696, Johann
Bernoulli introduced the \emph{brachistochrone} curve, the
shortest path in the presence of gravity, as ``an honest,
challenging problem, whose possible solution will bestow fame
and remain as a lasting monument''~\cite{bernoulli}. With six solutions to his
problem published just one year after it was posed, this event marked the birth
of the field of \emph{calculus of variations}.
In this section, we review some work related to computing shortest paths in a
weighted domain.

\paragraph{Models for Geometries.}  The cost function $c(\cdot)$ in \eqref{eqn:costLength} can
be deliberately picked so that the metric $\cdist(\cdot)$ coincides with
well-known metrics. For example, $c(x,y) = 1$ gives the Euclidean metric in the
plane and $c(x,y) = 1/y$ gives the Poincar\'{e} metric in the half-plane model of
hyperbolic (Lobachevsky) geometry~\cite{kotak10hyperbolic}.

\paragraph{Motion Planning.} Rowe and Ross~\cite{row90gridFreePathPlan}~as well
as Kime and Hespanha~\cite{kim03shortestPathUAV} consider the problem of
computing anisotropic shortest paths on a terrain.   An anisotropic path cost
takes into account the (possibly weighted) length of the path and the direction
of travel.  Note that this problem can be translated into the
problem of computing a shortest path between two compact subspaces of $\R^6$
under a certain cost function.

\paragraph{Computational Geometry.} Indeed, computational geometers are
interested in different versions of this problem.
In the simplest case, $c$ takes values from $\{1, \infty\}$, i.e., the space is
divided into free space and obstacles.
This problem can be solved in polynomial time using visibility graph in two
dimensions, and it can be $\varepsilon$-approximated in three dimensions. 
For example, the computation of the Fr\'echet distance can be posed in this
way~\cite{alt1995computing}.
 A slightly more complicated case occurs when we let $c$ be a piecewise-constant
function.
 Mitchell and Papadimitriou~\cite{mitchell91weightedRegion} formulated this
problem in two dimensions and designed a linear-time algorithm to find the
solution within $\varepsilon$-accuracy.  They list the problem for more general
cost functions as an open problem (See Section 10, problem number (3)).  A
series of
works~\cite{aleksandrov98wrsp,aleksandrov00wrsp,reif00wrsp,aleksandrov05wrsp} has resulted in an $\varepsilon$-approximation algorithm that
computes the shortest paths in a weighted polyhedral surface in $O(
(n/\sqrt{\varepsilon}) \log(n/\varepsilon) \log(1/\varepsilon)$~time.

\paragraph{Machine Learning.} Sajama and Orlitsky~\cite{sajama05estimatingDBDM}
first applied density-based distance (DBDs) to semi-supervised learning.
Assuming that the sample points are taken from an underlying distribution with density $f$, a density-based distance can be defined by setting $c(x) = f(x)^{-1/d}$ in~\eqref{eqn:costLength}~and~\eqref{eqn:costDist}.%
\footnote{The definition of DBDs in~\cite{sajama05estimatingDBDM} is more general in that it allows for a choice of discount functions (we take the inverse raised to the power $1/d$ as this seems to be the most natural way of turning a measure of volumetric density to a measure of length).}
The goal here is to place points that can be connected through dense regions
into the same cluster.
Vincent and Bengio~\cite{vincent04densitySens} and Bousquet et
al.~\cite{bousquet04measurebased} suggest estimating $f$ using a KDE and then
approximating the metric by discretizing the space in a similar fashion to
Tsitsiklis~\cite{tsitsiklis95optimalTrajectories}. However, they do not provide
any analysis on the complexity of the discretized space.
Bijral et al.~\cite{bijral11semiSupLearningDBD} bypass estimating $f$ by
building a complete graph over a set of points $\{x_1, x_2, \dots, x_n\}$
sampled with respect to $f$, in which the length of an edge $(x_i, x_j)$ is
$||x_i-x_j||_p^q$ for fixed $p$ and $q$, and computing pairwise shortest paths
in this graph.
Hwang et al.~\cite{hwang12spthroughsamples} prove that for certain values of $p$ and $q$, the latter metric and the density based metric are equivalent up to a linear factor for sufficiently large values of  $n$.

The nearest neighbor metric can be viewed as a special case of density-based distance when the underlying density is the nearest neighbor density estimator.

\section{Preliminaries}
\label{sec:definitions}

In this section, we define some basic concepts that are used in the paper.


\subsection{Metrics}
In this paper, we consider three metrics.  Each metric is defined by a length function on a set of paths between two points of the space.  The distance between two points is the length of the shortest path between them.

\paragraph{Euclidean metric.}  This is the most natural metric defined by the Euclidean length.  We use $\len(\gamma)$ to denote the Euclidean length of a curve $\gamma$; $\len(\gamma)$ can also be defined by setting $c(x) = 1$ for all $x\in \R^d$ in \eqref{eqn:costLength}.  We use $\dist(x,y)$ to denote the distance between two points $x,y \in \R^d$ based on the Euclidean metric.

\paragraph{Nearest neighbor metric.} As mentioned above, the nearest neighbor length of a curve with respect to a set of points $P$, is defined by setting $c(\cdot)$ to be $\dnn(\cdot)$ in \eqref{eqn:costLength}.  The nearest neighbor length of a curve $\gamma$ is denoted by $\nnlen(\gamma)$, and the distance between two points $x, y \in \R^d$ with respect to the nearest neighbor metric is denoted by $\nndist(x,y)$.

\paragraph{Edge-squared metric.} Finally, the edge-squared metric is defined as the shortest path metric on a complete graph on a point set $P$, where the length of each edge is its Euclidean length squared.  
The length of a path $\gamma$ in this graph is naturally the total length of its edges and it is denoted by $\sqlen(\gamma)$.  The edge-squared distance between two points $x,y \in P$ is the length of the shortest path and is denoted by $\sqdist(x,y)$.

\subsection{Voronoi Diagrams and Delaunay Triangulations}
Let $P$ be a finite set of
points, called \emph{sites}, in $\R^n$, for some $n \geq 1$.
The Delaunay triangulation $\del(P)$ is a decomposition of the plane into
simplices such that for each simplex $\sigma \in \del(P)$,
the Delaunay empty circle property is satisfied; that is, there
exists a circle $C$ such that the vertices of $\sigma$ are on
the boundary of $C$ and int$(C) \cap V$ is empty.  The Voronoi diagram, denoted
$\vor(P)$, is the planar dual to $\del(P)$.
We define the in-ball of a Voronoi cell with site $p$ to be the maximal ball centered at $p$ that is contained in the cell.
The inradius of a Voronoi cell is the radius of its in-ball.
We refer the reader
to~\cite{deberg2000computational} for more details.

\section{Nearest Neighbor Distance Versus Edge-Squared Distance}
\label{sec:approximation}

In this section,  we show that the nearest neighbor distance of two points $x,y 
\in P$ can be approximated within a factor of three by looking at their 
edge-squared distance. More precisely, $\sqdist(x,y)/4 \geq \nndist (x, y) \geq 
\sqdist(x,y)/12$ (see Lemma~\ref{lem:nnupper} and 
Lemma~\ref{lem:lowerConstantBound}).   

As a consequence, a constant factor approximation of the $\dnn$-distance can be obtained via computing shortest paths on a weighted graph, in nearly-quadratic time.
This approximation algorithm becomes more efficient, if the shortest paths are computed on a Euclidean spanner of the points, which is computable in nearly linear time~\cite{Har-peled11Book}.
A result of Lukovszki et al.~(Theorem~16(ii)~of~\cite{lukovszki06resource}) confirms that a $(1+\varepsilon)$-Euclidean spanner is a $(1+\varepsilon)^2$-spanner for the edge squared metric.
Therefore, we obtain Theorem~\ref{thm:threeApx}.

Before, starting the technical part of this section, we remark that both the nearest neighbor and edge-squared metric can have $\Omega(\log n)$ doubling dimension.  An illustrative example is a star with dense points sets on its edges.

\subsection{The Upper Bound}
\label{subsec:nnupper}

We show that the edge-squared distance between any pair of points $x,y \in P$ 
(with respect to the point set $P$) is always larger than four times the 
$\dnn$-distance between $x$ and $y$ (with respect to $P$).
To this end, we consider any shortest path with respect to the edge-squared 
measure and observe that its $\dnn$-length is an upper bound on the 
$\dnn$-distance between its endpoints.  

\begin{restatable}{lemma}{nnupper}
\label{lem:nnupper}
 Let $P = \{p_1, p_2, \dots, p_n\}$ be a set of points in $\R^d$, and let $\nndist$ and $\sqdist$ be the associated nearest neighbor and edge-squared distances, respectively. 
 Then, for any distinct points
 $x,y\in P$, we have that $\nndist(x,y) \leq \frac{1}{4}\sqdist(x,y)$.
\end{restatable}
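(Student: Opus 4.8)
The plan is to bound the $\dnn$-length of a single edge $\{p,q\}$ of the complete graph by $\frac{1}{4}$ of its edge-squared weight $\dist(p,q)^2$, and then sum over the edges of a shortest edge-squared path. Concretely, fix an edge $\{p,q\}$ with $p,q \in P$, and consider the straight segment $\gamma$ from $p$ to $q$, parametrized by arc length $t \in [0, \dist(p,q)]$. Since $p \in P$, the nearest-neighbor cost at the point at arc-length $t$ satisfies $\dnn(\gamma(t)) \le t$; symmetrically $\dnn(\gamma(t)) \le \dist(p,q) - t$. Hence $\dnn(\gamma(t)) \le \min\{t,\ \dist(p,q)-t\}$, and
\[
\nnlen(\gamma) \;=\; \int_0^{\dist(p,q)} \dnn(\gamma(t))\, dt \;\le\; \int_0^{\dist(p,q)} \min\{t,\ \dist(p,q)-t\}\, dt \;=\; \frac{1}{4}\dist(p,q)^2.
\]
This is the one-edge estimate, and it is exactly the source of the factor $\tfrac14$ in the statement.

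Next I would assemble the global bound. Let $x = v_0, v_1, \dots, v_k = y$ be the vertices (in $P$) along a shortest path for the edge-squared metric, so that $\sqdist(x,y) = \sum_{i=1}^{k} \dist(v_{i-1}, v_i)^2$. Let $\Gamma$ be the concatenation of the straight segments $[v_{i-1}, v_i]$; this is a (piecewise-linear) curve in $\R^d$ from $x$ to $y$, so $\nndist(x,y) \le \nnlen(\Gamma)$. By additivity of the path integral over the pieces and the one-edge estimate applied to each segment,
\[
\nndist(x,y) \;\le\; \nnlen(\Gamma) \;=\; \sum_{i=1}^{k} \nnlen\big([v_{i-1},v_i]\big) \;\le\; \sum_{i=1}^{k} \frac{1}{4}\dist(v_{i-1},v_i)^2 \;=\; \frac{1}{4}\sqdist(x,y),
\]
which is the claim. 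Note every intermediate vertex $v_i$ lies in $P$, so on each segment the cost really does vanish at both endpoints — this is what makes the per-edge bound tight enough.

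The steps are all elementary; the only place that needs a little care is the single-edge integral bound, i.e. justifying $\dnn(\gamma(t)) \le \min\{t, \dist(p,q)-t\}$ and then evaluating $\int_0^L \min\{t, L-t\}\,dt = L^2/4$. The first part is immediate from the definition of $\dnn$ as distance to the nearest point of $P$ together with $p, q \in P$ and the triangle inequality along a straight segment; the second is a routine computation (split at the midpoint). I do not expect a genuine obstacle here — the lemma is the "easy direction," with the real work deferred to the matching lower bound in Lemma~\ref{lem:lowerConstantBound}. One minor subtlety worth a sentence in the writeup: the edge-squared shortest path is a finite sequence of hops among points of $P$ (a shortest path in a finite weighted graph exists and is simple), so the concatenated curve $\Gamma$ is well-defined and rectifiable, and $\nndist(x,y)\le \nnlen(\Gamma)$ holds because $\Gamma$ is one competitor in the infimum defining $\nndist$.
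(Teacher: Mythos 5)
Your proof is correct and is essentially the same as the paper's: both take a shortest edge-squared path, replace it by the concatenation of straight segments, and bound the $\dnn$-length of each segment by $\tfrac14\dist(q_i,q_{i+1})^2$ via the pointwise estimate $\dnn \le$ distance to the nearer endpoint, split at the midpoint. Your $\min\{t, L-t\}$ formulation is just a compact way of writing the paper's two half-integrals.
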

\begin{proof}
 Consider the shortest path $x = q_1 \to q_2 \to \cdots \to q_k = y$ with respect to the edge-squared metric. 
 Let $\gamma$ be the same path in $\R^d$ parameterized by arc length that uses straight line segments between each pair $q_i$, $q_{i+1}$.
 By the definition of the edge-squared distance, we have
 \begin{eqnarray*}
     \sqdist(x,y) = \sum_{i=1}^{k-1} \dist(q_i, q_{i+1})^2.
 \end{eqnarray*}
On the other hand, by the definition of $\dnn$-distance we have
\[
\nndist(x,y) \leq \ell_\dnn(\gamma) = \int_\gamma \dnn(s)\,ds.
\]
The following sequence of equalities follow by basic rules of integration.
\begin{eqnarray*}
    \int_\gamma \dnn(s)\,ds &=& \sum_{i=1}^{k-1} \int_{q_i\to q_{i+1}}
    \dnn(s)\,ds\\
    &=& \sum_{i=1}^{k-1} \left(\int_{q_i\to\frac{q_i + q_{i+1}}{2}}
    \dnn(s)\,ds + \int_{\frac{q_i + q_{i+1}}{2} \to q_{i+1}}
    \dnn(s)\,ds \right) \\
    &\leq& \sum_{i=1}^{k-1} \left(\int_{q_i\to\frac{q_i + q_{i+1}}{2}}
    \dist(s,q_i)\,ds + \int_{\frac{q_i + q_{i+1}}{2} \to q_{i+1}}
    \dist(s,q_{i+1})\,ds \right) \\
    &=& \sum_{i=1}^{k-1} \left(\int_0^{\dist(q_i, q_{i+1})/2} t\,dt +
    \int_0^{\dist(q_i, q_{i+1})/2} t\,dt \right)\\
    &=& \sum_{i=1}^{k-1} \frac{\dist(q_i, q_{i+1})^2}{4} \\
    &=& \frac{1}{4}\sqdist(x, y).
\end{eqnarray*}
Thus, the proof is complete.
\end{proof}

\subsection{The Lower Bound}
\label{subsec:nnlower}
Next, we show that the edge-squared distance between any pair of points from $P$ cannot be larger than twelve times their $\dnn$-distance.
To this end, we break a shortest path of the $\dnn$-distance into segments in a 
certain manner, and shadow the endpoints of each segment into their closest 
point of $P$ to obtain a short edge-squared path.  The following definition 
formalizes our method of discretizing paths.   

\begin{definition}
Let $P = \{p_1, p_2, \cdots, p_n\}$ be a set of points in $\R^d$, and let $x, y \in
P$. Let $\gamma:[0,1]\rightarrow \R^d$ be an $(x,y)$-path that is internally disjoint from $P$.  A sequence 
$0<t_0\leq t_1\leq \dots\leq t_k < 1$ is a \emph{proper breaking sequence} of~$\gamma$ if it has the following
properties:
\begin{enumerate}
\item The nearest neighbors of $\gamma(t_0)$ and $\gamma(t_k)$ in $P$ are $x$ and $y$, respectively.
\item For all $1\leq i \leq k$, we have $\len(\gamma[t_{i-1}, t_i]) = \frac{1}{2}(\dnn(\gamma(t_{i-1})) + \dnn(\gamma(t_{i})))$
\end{enumerate}
\end{definition}
The following lemma guarantees the existence of breaking sequences.  

\begin{restatable}{lemma}{breakingSeqExists}
\label{lem:breakingSeqExist}
Let $P = \{p_1, p_2, \cdots, p_n\}$ be a set of points in $\R^d$, and let $x, y \in
P$. Let $\gamma$ be a path from $x$ to $y$ that is internally disjoint from $P$.  There 
exists a proper breaking sequence of $\gamma$.
\end{restatable}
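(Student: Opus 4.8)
The plan is to construct the proper breaking sequence greedily, one index at a time, choosing each $t_i$ so that property~(2) holds exactly, and to verify that the process both makes progress (so that we eventually pass the point of $\gamma$ whose nearest neighbor is $y$) and terminates. First I would reparametrize $\gamma$ by arc length, so that $\len(\gamma[a,b]) = b-a$, and define $g(t) = \dnn(\gamma(t))$; note that $g$ is continuous and, in fact, $1$-Lipschitz, since $\dnn$ is $1$-Lipschitz and $\gamma$ is $1$-Lipschitz in the arc-length parametrization. I would set $t_0$ to be the last parameter value for which $x$ is still a nearest neighbor of $\gamma(t_0)$; more carefully, let $s^* = \sup\{t : x \in \nn(\gamma(t))\}$, which is positive because $\gamma(0)=x$ and $\gamma$ is internally disjoint from $P$ (so $g$ is positive on $(0,1)$ and small near $0$), and take $t_0$ slightly less than $s^*$ — or $t_0 = s^*$ itself if the nearest neighbor there is still $x$. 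Symmetrically, I would identify the first parameter $u^*$ at which $y$ becomes a nearest neighbor and aim to land $t_k$ at or just past $u^*$.

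The inductive step: given $t_{i-1}$ with $g(t_{i-1}) > 0$, I want $t_i \ge t_{i-1}$ with $\len(\gamma[t_{i-1},t_i]) = \tfrac12(g(t_{i-1}) + g(t_i))$, i.e. in the arc-length parametrization, $t_i - t_{i-1} = \tfrac12(g(t_{i-1}) + g(t_i))$. Define $h(t) = (t - t_{i-1}) - \tfrac12(g(t_{i-1}) + g(t))$ for $t \ge t_{i-1}$. Then $h(t_{i-1}) = -g(t_{i-1}) < 0$, and since $g$ is $1$-Lipschitz, $h(t) \ge (t-t_{i-1}) - \tfrac12 g(t_{i-1}) - \tfrac12(g(t_{i-1}) + (t - t_{i-1})) = \tfrac12(t - t_{i-1}) - g(t_{i-1})$, which is positive once $t - t_{i-1} > 2g(t_{i-1})$; so by the intermediate value theorem a valid $t_i$ exists (take the smallest such root). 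The key quantitative point is the progress bound: from $h(t_i) = 0$ and $g \ge 0$ we get $t_i - t_{i-1} = \tfrac12(g(t_{i-1}) + g(t_i)) \ge \tfrac12 g(t_{i-1})$, so each step advances the arc-length parameter; I should also observe it advances by a definite amount unless $g(t_{i-1})$ is tiny, which is the mechanism that carries the sequence forward past $u^*$.

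The main obstacle is termination — ruling out the degenerate scenario in which the steps $t_i - t_{i-1}$ shrink geometrically and the sequence converges to some interior limit $t_\infty < u^*$ without ever reaching a point whose nearest neighbor is $y$. To handle this I would argue as follows: if $t_i \to t_\infty$, then $t_i - t_{i-1} \to 0$, hence $g(t_{i-1}) + g(t_i) \to 0$, hence $g(t_\infty) = 0$, meaning $\gamma(t_\infty) \in P$. Since $\gamma$ is internally disjoint from $P$, this forces $t_\infty \in \{0,1\}$, and as the sequence is increasing and bounded below by $t_0 > 0$ we must have $t_\infty = 1$, i.e. $\gamma(t_\infty) = y$; at that point we simply stop (or, since the $t_i$ are required to satisfy $t_k < 1$, truncate at the first index past $u^*$ and note the nearest-neighbor condition in property~(1) is met by then). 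Conversely, if the construction stops at a finite index because no further $t_{i+1} < 1$ is needed — i.e. we have reached $t_i \ge u^*$ — we set $k = i$ and are done. Assembling these cases, together with the explicit choice of $t_0$ from the first paragraph, yields a finite sequence $0 < t_0 \le \dots \le t_k < 1$ satisfying both properties, which completes the proof.
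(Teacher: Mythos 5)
Your proof follows the same constructive scheme as the paper: pick $t_0$ in the Voronoi cell of $x$, then iteratively produce each $t_i$ via the intermediate value theorem applied to $h(t) = (t-t_{i-1}) - \frac{1}{2}(g(t_{i-1})+g(t))$; this is exactly the paper's IVT argument, just written in the arc-length parametrization. Where you differ is in proving that the process halts. The paper observes that on $[t_0,1]$ the distance to $P\setminus\{y\}$ attains a positive minimum $p_{\min}$, so every step both of whose endpoints lie outside the Voronoi cell of $y$ advances the arc length by at least $p_{\min}$, and finiteness of $\len(\gamma)$ forces termination. You argue by contradiction: if the process never halts, the $t_i$ converge, the step lengths go to zero, hence $g(t_\infty)=0$, forcing $t_\infty$ to be an endpoint, necessarily $1$; so $\gamma(t_i)\to y$ and for large $i$ the nearest neighbor of $\gamma(t_i)$ is $y$, contradicting non-termination. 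Both are valid; the paper's step-size bound is more immediate, while your convergence argument is more structural and avoids the compactness/extreme-value step. Two small imprecisions to fix. First, the parenthetical ``truncate at the first index past $u^*$ and note the nearest-neighbor condition is met by then'' is not quite right: a $t_i>u^*$ need not lie in the Voronoi cell of $y$, since the path can exit and re-enter it; what your convergence argument actually delivers is that for large $i$ the point $\gamma(t_i)$ lies so close to $y$ that $y$ must be its nearest neighbor, and that is what you should invoke. Second, taking $t_0$ ``slightly less than $s^*=\sup\{t : x\in\nn(\gamma(t))\}$'' can land you at a parameter whose nearest neighbor is not $x$ if the set $\{t:x\in\nn(\gamma(t))\}$ is disconnected; it is simpler, and sufficient, to take $t_0$ to be any small enough positive value, as the paper does.
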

\begin{proof}
Pick $t_0$ such that the closest neighbor to $\gamma(t_0)$ in $P$ is $x$.
Inductively, pick $t_i > t_{i-1}$ so that property (2) of a  proper breaking sequence holds until $\gamma(t_i)$'s closest neighbor is $y$.

We need to prove two properties: \textbf{(I)} a $t_i$ with property (2) always exists, \textbf{(II)} this process ends after a finite number of steps (i.e., $t_i$ falls in the Voronoi cell of $y$ for some $i$).

\paragraph{Property I.}  Let $t_{i-1}$ be the last selected point in the process.  We prove that a $t_i \in (t_{i-1}, 1)$ exists such that
\[
\len(\gamma[t_{i-1}, t_i]) = \frac{1}{2}(\dnn(\gamma(t_{i-1})) + \dnn(\gamma(t_{i}))).
\]

Let the functions $f:[t_{i-1}, 1] \rightarrow \R$ and $g:[t_{i-1}, 1]\rightarrow \R$ be defined as follows:
\begin{eqnarray*}
f(t) &=& \len(\gamma[t_{i-1}, t]) \\
g(t) &=& \frac{1}{2}(\dnn(\gamma(t_{i-1})) + \dnn(\gamma(t))).
\end{eqnarray*}
In particular, $f(t_{i-1}) = 0$ and $g(t_{i-1}) = \dnn(\gamma(t_{i-1})) > 0$; so $g(t_{i-1}) > f(t_{i-1})$.
On the other hand,
\begin{eqnarray*}
f(1) = \len(\gamma(t_{i-1}, 1)) \geq \dnn(\gamma(t_{i-1})) > \dnn(\gamma(t_{i-1}))/2 = g(1).
\end{eqnarray*}
The first inequality is a result of the Lipschitz property and the fact that $\dnn(\gamma(1)) = \dnn(y) = 0$.

Since both $f$ and $g$ are continuous functions, the intermediate value theorem implies the existence of a $t_i \in [t_{i-1}, 1]$ such that $f(t_i) = g(t_i)$, which in turn implies property (I).

\paragraph{Property II.}
For $t\in [t_0, 1]$, let $p(t)$ be the Euclidean distance from $\gamma(t)$ to $P\setminus \{y\}$.
By definition, $p(t)$ is positive everywhere.
Since $p$ is continuous and defined on a closed interval, by the extreme value theorem, it attains a minimum value $p_{\min} > 0$.
In any inductive step, if the nearest neighbor of neither $t_{i-1}$ nor $t_i$ is $y$ then $\len(\gamma[t_{i-1}, t_i]) \geq p_{\min}$, by the second property of a breaking sequence.
This implies the existence of a finite $k$ such that $\gamma(t_k)$'s nearest neighbor in $P$ is $y$.
\end{proof}

\begin{lemma}
\label{lem:lipschitzUpperBound}
Let $P = \{p_1, p_2, \cdots, p_n\}$ be a set of points in $\R^d$.  Furthermore, let $\gamma$ be any path in $\R^d$ and $x$ be an endpoint of $\gamma$. 
If $\len(\gamma) = s$, then $\nnlen(\gamma) \geq s \dnn(x) - s^2/2$.
\end{lemma}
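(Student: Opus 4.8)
The plan is to bound how fast $\dnn$ can decrease along $\gamma$ and integrate. Parametrize $\gamma:[0,s]\to\R^d$ by arc length with $\gamma(0)=x$. Since $\dnn$ is $1$-Lipschitz (it is the distance to the set $P$, a minimum of $1$-Lipschitz functions), for any $t\in[0,s]$ we have $\dnn(\gamma(t)) \geq \dnn(\gamma(0)) - \dist(\gamma(0),\gamma(t)) \geq \dnn(x) - t$, because the Euclidean distance between $\gamma(0)$ and $\gamma(t)$ is at most the arc length $t$ of the sub-path between them.

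From here the rest is a one-line integration:
\[
\nnlen(\gamma) = \int_0^s \dnn(\gamma(t))\,dt \geq \int_0^s \bigl(\dnn(x) - t\bigr)\,dt = s\,\dnn(x) - \frac{s^2}{2}.
\]
So I would write: invoke the Lipschitz bound to get the pointwise inequality $\dnn(\gamma(t)) \geq \dnn(x) - t$, then integrate over $[0,s]$ using that $\nnlen$ is precisely $\int_\gamma \dnn(s')\,ds'$ with the arc-length parametrization, i.e. $\int_0^s \dnn(\gamma(t))\,dt$.

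There is essentially no obstacle here — the only thing to be a little careful about is that the inequality $\dnn(\gamma(t)) \geq \dnn(x) - t$ is only useful (nonnegative) when $t \leq \dnn(x)$, but it remains a valid lower bound on $\dnn(\gamma(t)) \geq 0$ even for larger $t$ as long as we are integrating $\dnn(x)-t$ against it; actually one should note the integrand $\dnn(x)-t$ may go negative for $t>\dnn(x)$, in which case the bound $\dnn(\gamma(t))\geq\dnn(x)-t$ is still trivially true (left side nonnegative, right side negative), so integrating term-by-term is fine and the claimed inequality $\nnlen(\gamma)\geq s\dnn(x)-s^2/2$ holds for all $s$. I would mention this briefly so the reader sees the estimate is not vacuous, but it requires no extra work.
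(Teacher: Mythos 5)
Your proof is correct and is essentially identical to the paper's: reparametrize by arc length, use the $1$-Lipschitz property of $\dnn$ to obtain the pointwise bound $\dnn(\gamma(t)) \geq \dnn(x) - t$, and integrate. The brief remark about the integrand possibly going negative is a harmless addition that the paper omits.
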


\begin{proof}
Let $\gamma_u$ be a unit speed reparameterization of $\gamma$ (i.e., $|\gamma_u(t)| = 1$ for all $t$).
Suppose, without loss of generality, that $x = \gamma(0) = \gamma_u(0)$.
Then, by definition of the nearest neighbor metric,
\[
\nnlen(\gamma) = \nnlen(\gamma_u) = \int_{0}^{s}{\dnn(\gamma_u(t)) dt}.
\]
Then, the Lipschitz property of the $\dnn$ function implies:
\begin{eqnarray*}
\nnlen(\gamma) &\geq&  \int_{0}^{s}{(\dnn(\gamma_u(0))-t) dt} \\
&=& \int_{0}^{s}{(\dnn(x)-t) dt} \\
&=& s \dnn(x) - s^2/2.
\end{eqnarray*}
\end{proof}

Given a path $\gamma$ that realizes the nearest neighbor distance between two points $x$ and $y$, in the proof of the following lemma we show how to obtain another $(x,y)$-path with bounded edge-squared length.  The proof heavily relies on the idea of breaking sequences.

\begin{restatable}{lemma}{lconstbound}
\label{lem:lowerConstantBound}
Let $P = \{p_1, p_2, \cdots, p_n\}$ be a set of points in $\R^d$, and let
$\nndist$ and $\sqdist$ be the associated nearest neighbor and edge-squared
distances, respectively. Then, for any distinct points $x, y \in
P$, $\nndist(x,y) \geq \frac{1}{12} \sqdist(x,y)$.
\end{restatable}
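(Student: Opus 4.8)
The plan is to take a path $\gamma$ realizing (or nearly realizing) $\nndist(x,y)$, which we may assume is internally disjoint from $P$, fix a proper breaking sequence $0 < t_0 \le t_1 \le \dots \le t_k < 1$ guaranteed by Lemma~\ref{lem:breakingSeqExist}, and build a discrete $(x,y)$-path in the edge-squared graph by ``snapping'' each $\gamma(t_i)$ to its nearest neighbor $s_i \in P$ (with $s_0 = x$, $s_k = y$). The edge-squared length of this snapped path is $\sum_{i=0}^{k-1} \dist(s_i, s_{i+1})^2$, and by the triangle inequality each term is at most $\left(\dnn(\gamma(t_i)) + \len(\gamma[t_i,t_{i+1}]) + \dnn(\gamma(t_{i+1}))\right)^2$. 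The defining property (2) of the breaking sequence says $\len(\gamma[t_i,t_{i+1}]) = \tfrac12(\dnn(\gamma(t_i)) + \dnn(\gamma(t_{i+1})))$, so writing $a_i := \dnn(\gamma(t_i))$ the $i$-th edge contributes at most $\left(\tfrac32(a_i + a_{i+1})\right)^2 = \tfrac94 (a_i+a_{i+1})^2$. I also need to handle the two end pieces $\gamma[0,t_0]$ and $\gamma[t_k,1]$ separately, but these lie inside the Voronoi cells of $x$ and $y$ respectively and only add a bounded amount; alternatively one folds them into the first/last segment.

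Next I would lower bound $\nnlen(\gamma[t_i, t_{i+1}])$ from below in terms of the same quantities $a_i, a_{i+1}$, so that summing gives $\sqdist(x,y) \le 12\, \nndist(x,y)$. Applying Lemma~\ref{lem:lipschitzUpperBound} (the bound $\nnlen \ge s\,\dnn(x) - s^2/2$) to the segment $\gamma[t_i,t_{i+1}]$ from \emph{both} endpoints and averaging, with $s = \len(\gamma[t_i,t_{i+1}]) = \tfrac12(a_i + a_{i+1})$, should yield a lower bound of the form $\nnlen(\gamma[t_i,t_{i+1}]) \ge c\,(a_i + a_{i+1})^2$ for an explicit constant $c$. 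Comparing with the upper bound $\tfrac94(a_i+a_{i+1})^2$ on the edge-squared contribution, a term-by-term comparison then gives the factor $12$ (i.e. one wants $\tfrac94 / c = 12$, so $c = \tfrac{3}{16}$; the Lipschitz estimate should comfortably deliver a constant of at least this size since along the segment $\dnn$ cannot drop below roughly $\tfrac12(a_i+a_{i+1}) - \tfrac12 s$-type values). Summing over $i$ and collapsing the telescoped discrete path to the true shortest edge-squared path (which is only shorter) finishes the argument.

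The main obstacle I expect is making the per-segment lower bound on $\nnlen(\gamma[t_i,t_{i+1}])$ tight enough. Lemma~\ref{lem:lipschitzUpperBound} used from a single endpoint gives $s\,a_i - s^2/2$, which can be weak when $a_i$ and $a_{i+1}$ are very different or when $s$ is comparable to $a_i$; the symmetric trick of using both endpoints and taking the better (or the average) of the two bounds is what rescues a clean universal constant, and verifying that the resulting quadratic inequality in $a_i, a_{i+1}$ holds for \emph{all} nonnegative values (not just the balanced case $a_i = a_{i+1}$) is the delicate calculation. A secondary nuisance is the bookkeeping for the initial and final sub-paths and the case $k=0$ (when $\gamma(t_0)$ already lies in the Voronoi cell of $y$), where one must argue directly that a single snap from $x$ to $y$ has edge-squared cost $\dist(x,y)^2 \le 12\,\nndist(x,y)$; this should follow from Lemma~\ref{lem:lipschitzUpperBound} applied to $\gamma$ itself, possibly after observing $\dist(x,y) \le \len(\gamma)$ and bounding $\len(\gamma)$ against $\dnn$-length.
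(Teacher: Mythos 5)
Your high-level plan is exactly the paper's: snap the breaking-sequence points to their nearest neighbors in $P$, bound $\dist(n_{i-1},n_i)$ from above by the triangle inequality, bound $\nnlen(\gamma[t_{i-1},t_i])$ from below via Lemma~\ref{lem:lipschitzUpperBound}, and compare termwise. Your upper bound $\dist(n_{i-1},n_i)^2 \le \tfrac94(a_{i-1}+a_i)^2$ is correct, and you correctly identify that the proof reduces to showing a per-segment lower bound with constant $c=\tfrac{3}{16}$.

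However, the step you flag as ``delicate'' actually fails with the method you propose. Applying Lemma~\ref{lem:lipschitzUpperBound} to the \emph{whole} segment from each endpoint and averaging, with $s = \len(\gamma[t_{i-1},t_i]) = \tfrac12(a_{i-1}+a_i)$, gives
\[
\nnlen(\gamma[t_{i-1},t_i]) \;\ge\; \frac{s(a_{i-1}+a_i)}{2} - \frac{s^2}{2} \;=\; \frac{(a_{i-1}+a_i)^2}{4} - \frac{(a_{i-1}+a_i)^2}{8} \;=\; \frac{1}{8}(a_{i-1}+a_i)^2,
\]
i.e.\ $c = \tfrac18 = \tfrac{2}{16}$, strictly less than the required $\tfrac{3}{16}$. (Taking the max of the two one-sided bounds rather than the average still yields only $\tfrac18$ in the balanced case $a_{i-1}=a_i$, which is the worst case.) So your version delivers a factor of $\tfrac{9/4}{1/8}=18$, not $12$. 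The paper's fix is to split the segment at its Euclidean midpoint $m_i$ and apply Lemma~\ref{lem:lipschitzUpperBound} to each half separately, with $s' = \tfrac14(a_{i-1}+a_i)$ equal to the \emph{half}-length. Because the penalty term $-s^2/2$ in the lemma is quadratic in $s$, two short applications beat one long one: you get $(s'a_{i-1}-s'^2/2)+(s'a_i-s'^2/2) = s'(a_{i-1}+a_i)-s'^2 = 3s'^2 = \tfrac{3}{16}(a_{i-1}+a_i)^2$, which is exactly the constant needed. This midpoint-split idea is the missing ingredient; once you replace the full-segment averaging with it, the rest of your argument (including dropping the end pieces $\gamma[0,t_0]$ and $\gamma[t_k,1]$, which only strengthens the lower bound) goes through and matches the paper.
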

\begin{proof}
Let $\gamma$ be a path from $x$ to $y$ that realizes the nearest neighbor distance between $x$ and $y$, i.e., $\nnlen(\gamma) = \nndist(x,y)$.

Suppose, without loss of generality, that $\gamma$ intersects $P$ only at $\{x, y\}$.
Otherwise, we break $\gamma$ into pieces that are internally disjoint from $P$ and prove the bound for each piece separately.

Let $\{t_0, t_1, \dots, t_k\}$ be a proper breaking sequence of $\gamma$.
For $0\leq i\leq k$, let $n_i \in P$ be the nearest neighbor of $\gamma(t_i)$; in particular $n_0 = x$ and $n_k = y$.

\begin{figure}[tbh]
  \centering
    \includegraphics[height=1.05in]{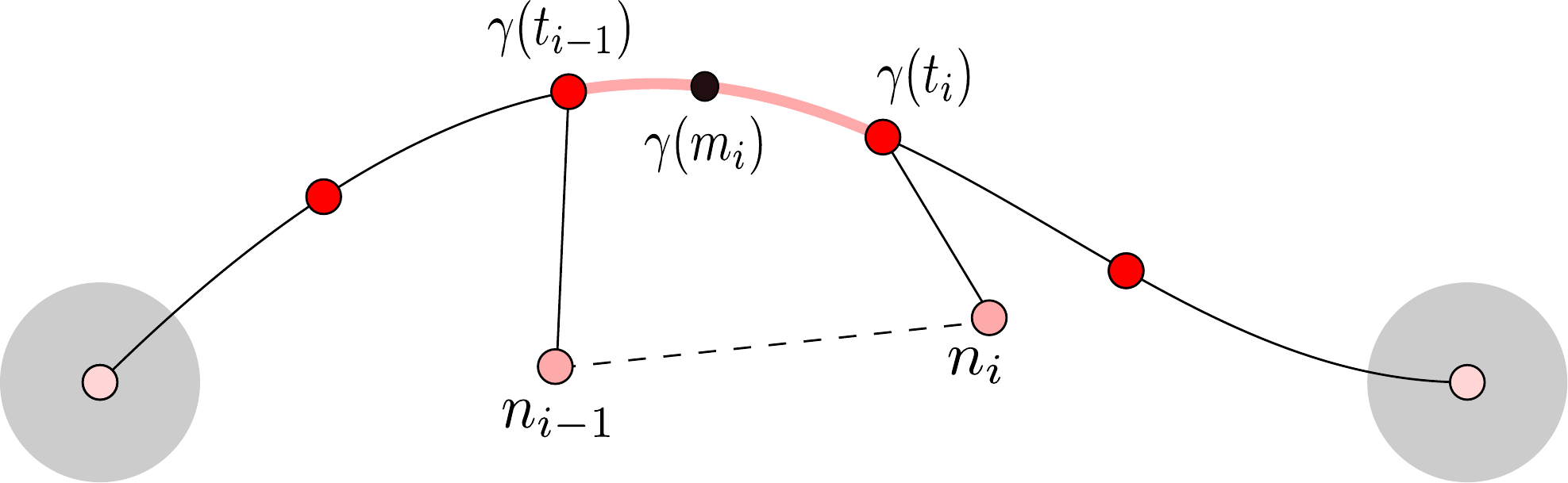}
      \caption{Proof of Lemma~\ref{lem:lowerConstantBound}}
  \label{fig:lowerConstantBound_proof}
\end{figure}

We show that $\nnlen(\gamma[t_{i-1}, t_i]) \geq \frac{1}{12} (\dist(n_{i-1}, n_i))^2$ for any $1\leq i\leq k$, which in turn implies
\[
\nnlen(\gamma) \geq \frac{1}{12}\sum_{i=1}^{k}{(\dist(n_{i-1}, n_i))^2} \geq \frac{1}{12} \sqdist(x,y).
\]

Property~2 of a breaking sequence implies that $\len(\gamma[t_{i-1},t_i]) = 2s$, where $s = \frac{1}{4}(\dnn(\gamma(t_{i-1})) + \dnn(\gamma(t_{i})))$.
We pick $m_i\in(t_{i-1}, t_i)$ so that $\len(\gamma[t_{i-1}, m_i]) = \len(\gamma[m_i, t_i]) = s$.
Lemma~\ref{lem:lipschitzUpperBound} implies
\begin{align}
  \nnlen(\gamma[t_{i-1}, t_i]) 
    &= \nnlen(\gamma[t_{i-1}, m_i]) + \nnlen(\gamma[m_i, t_i])\nonumber\\
    &\geq (s\dnn(\gamma(t_{i-1})) - s^2/2) + (s\dnn(\gamma(t_i)) - s^2/2)\nonumber\\
    &= 3s^2.\label{eqn:nnGammaLBound}
\end{align}

\noindent On the other hand, by the triangle inequality,
\begin{equation}\label{eqn:SqrSegUBound}
\dist(n_{i-1}, n_i) \leq \dnn(\gamma(t_{i-1})) + 2s + \dnn(\gamma(t_{i})) = 6s.
\end{equation}

\noindent The last equality follows by property $I$. Finally, Inequalities~\eqref{eqn:nnGammaLBound} and~\eqref{eqn:SqrSegUBound} imply
\[
\nnlen(\gamma_i) \geq \frac{1}{12}(\dist(n_{i-1}, n_i))^2,
\]
and the proof is complete.
\end{proof}

\section{A ($1+\varepsilon$)-Approximation for the Nearest Neighbor Metric}
\label{sec:approximation_schemes}

In this section, we describe a polynomial time approximation scheme to compute
the $\dnn$-distance between a pair of points from a finite set $P\subset\R^d$.
The running time of our algorithm is $\varepsilon^{-O(d)} n \log n$ for $n$
points in $d$-dimensional space.
We start with Section~\ref{subsec:onecell}, which describes an exact 
algorithm for the simple case in which $P$ consists of just one site. 
Section~\ref{subsec:approxsteiner} describes how to obtain a piecewise 
linear path using infinitely many Steiner points.  
Section~\ref{subsec:approxgraph} combines ideas from 
\ref{subsec:approxsteiner} and \ref{subsec:onecell} to cut down the 
required Steiner points to a finite number. Finally, 
Section~\ref{subsec:steinerconstruct} describes how to generate the necessary 
Steiner points.

\subsection{Nearest Neighbor Distance with One Site}
\label{subsec:onecell}

We describe a method for computing $\nndist$ for the special case that $P$
is a single point using complex analysis. This case will be important
since distances will go to zero at an input point and thus we must be more careful
at input points. Far away from input points, we will use a piecewise constant
approximation for the nearest neighbor function but near input points we will us exact
distances.
More than likely this case has been solved by others since the
solution is so elegant.
We refer the interested reader to~\cite{calVarStrain} for more general methods
to solve similar problems in the field of calculus of variations.

\begin{figure}[h]
  \centering
    \includegraphics[height=.75 in]{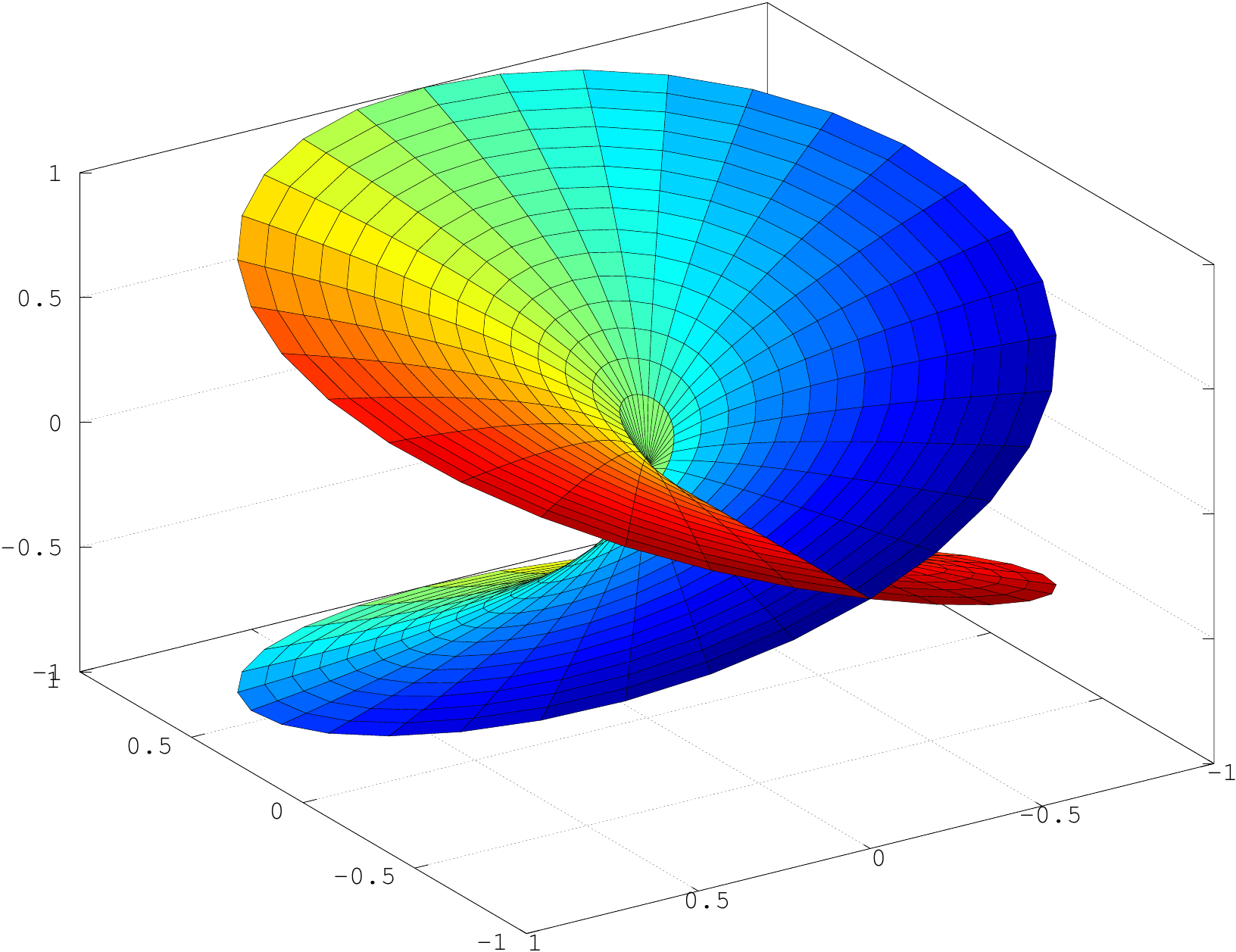}
      \caption{To make the complex function one-to-one, one needs to extend the
complex plane to the two-fold cover
       called the two-fold Riemann Surface.}
  \label{fig:riemann}
\end{figure}

Suppose we want to compute $\nndist(x,y)$ where $P = \{ (0,0) \}$.
Writing $(x,y) \in \C$ in polar coordinates as  $z= r e^{i \theta}$,
we define the \emph{quadratic transformation} $f \colon   \C \to
\Riemann$ by
\begin{equation*}
 f(z) = z^2/2 = (r^2/2) e^{i 2\theta},
\end{equation*}
where $\Riemann$ is the two-fold Riemann
surface; see Figure~\ref{fig:riemann}.
The important point here is
that the image is a double covering of $\C$.
For example, the points $1$ and $-1$ are
mapped to different copies of $1/2$. Therefore,
on the Riemann surface, the distance between $1$ and $-1$ is one and the shortest
path goes through the origin. More generally, given any two nonzero points $p$
and $q$ on the
surface, the minimum angle between them (measured with respect to the origin) will be
between 0 and $2\pi$. Moreover, if this angle is $\geq \pi$, then the shortest
path between them will consist of the two straight lines $[p,0]$ and
$[0,q]$. Otherwise, the line $[p,q]$ will be a line on the surface
and, thus, the geodesic from $p$ to $q$.


Let $\dist_{\Riemann}$ denote the distance on the Riemann surface.
We next show that for a single point, the nearest neighbor geodesic
is identical to the geodesic on the Riemann surface.

\begin{lemma}
 Let $\gamma \colon [0,1] \to \C$ be a curve.  Then,
 the image of $\gamma$ under $f$, denoted by $f \circ \gamma$
 satisfies the following property:
 \begin{equation*}
  \dist_{\Riemann} (f \circ \gamma) = \nnlen (\gamma).
 \end{equation*}
\end{lemma}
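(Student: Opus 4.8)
The plan is to exploit the fact that for a single site $P=\{(0,0)\}$ the nearest neighbor cost function is nothing but the modulus: $\dnn(z)=\dist(z,0)=|z|$ for every $z\in\C$. Hence, for a parameterization $\gamma:[0,1]\to\C$,
\[
\nnlen(\gamma)=\int_0^1 \dnn(\gamma(t))\,|\gamma'(t)|\,dt=\int_0^1 |\gamma(t)|\,|\gamma'(t)|\,dt .
\]
On the other side, the quantity $\dist_{\Riemann}(f\circ\gamma)$ is the length of the image curve measured in the flat cone metric that $\Riemann$ carries, namely the pullback $\pi^{*}(\text{Euclidean})$ under the branched covering projection $\pi:\Riemann\to\C$, which by the very construction of the two-fold surface satisfies $\pi\circ f=(z\mapsto z^{2}/2)$. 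Since $\pi$ is a local isometry away from its single branch point, the length of $f\circ\gamma$ in $\Riemann$ equals the ordinary Euclidean length of the planar curve $t\mapsto \gamma(t)^{2}/2$. So the lemma reduces to the statement that this Euclidean length equals the integral displayed above.

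The remaining step is a one-line change of variables. We may assume $\gamma$ is rectifiable (piecewise $C^{1}$ suffices; the general case follows by the partition definition of length, or both sides are $+\infty$). Because $f(z)=z^{2}/2$ has $f'(z)=z$, the chain rule gives $(f\circ\gamma)'(t)=f'(\gamma(t))\,\gamma'(t)=\gamma(t)\,\gamma'(t)$, so the Euclidean arc-length element of $t\mapsto\gamma(t)^{2}/2$ is $|\gamma(t)\,\gamma'(t)|\,dt=|\gamma(t)|\,|\gamma'(t)|\,dt$. Integrating over $[0,1]$ yields exactly $\nnlen(\gamma)$, which proves $\dist_{\Riemann}(f\circ\gamma)=\nnlen(\gamma)$. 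In other words, the whole content of the lemma is the identity $|d(z^{2}/2)|=|z|\,|dz|$: the conformal distortion of the quadratic map at a point is precisely that point's distance to the origin.

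I do not expect a serious obstacle here; the only care needed is bookkeeping around the degenerate point. Specifically: (i) at $z=0$ the map $f$ is not a local homeomorphism and $\Riemann$ has a cone point of angle $4\pi$, but the parameters $t$ with $\gamma(t)=0$ do not affect the arc-length integral, so a curve passing through the origin is handled with no change; (ii) one must note that $f\circ\gamma$ is a genuine continuous (rectifiable) curve in $\Riemann$, which holds because $f:\C\to\Riemann$ is a homeomorphism by definition of the two-fold Riemann surface; and (iii) one should state explicitly that $\Riemann$ is equipped with the metric making $\pi$ a local isometry, so that $\dist_{\Riemann}$ is well defined and the step "length on $\Riemann$ $=$ Euclidean length of $\pi\circ f\circ\gamma$" is immediate. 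Once these conventions are fixed, the argument is complete.
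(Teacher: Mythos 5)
Your proposal is correct and follows essentially the same route as the paper's own proof: both observe that $\dnn(z)=|z|$ for $P=\{0\}$, apply the chain rule to get $(f\circ\gamma)'=\gamma\,\gamma'$, use multiplicativity of the modulus, and identify the resulting integral with the arc length of $f\circ\gamma$ on $\Riemann$. The only difference is that you spell out a few conventions (the cone point, the pullback metric on $\Riemann$, rectifiability) that the paper leaves implicit.
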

\begin{proof}
 Suppose $\gamma \colon [0,1] \to \C$ is any
 piecewise differentiable curve, and let $\alpha :=
 f \circ \gamma$.
 The $\dnn$-length $\nnlen (\gamma)$ of $\gamma$ is
 the finite sum of the $\dnn$-length of all differentiable pieces of $\gamma$. If the
 path $\gamma$ goes through the origin, we further break the path at the origin so
 that $\alpha$ is also differentiable. Thus, it suffices to consider $(a,b) \subset
 [0,1]$ so that $\gamma[a,b]$ is a differentiable piece of $\gamma$. Then, we have
 \begin{align*}
 \nnlen (\gamma[a,b]) & = \int_a^b | \gamma(t)|| \gamma'(t) | \,dt && \mbox{$|
\cdot |$ is modulus.}\\
  & = \int_a^b | \gamma(t)  \gamma'(t) | \,dt  && \mbox{Modulus commutes with
product.}\\
  & = \int_a^b | \alpha'(t) | \,dt  && \mbox{Chain rule.} \\
 & = \len_{\Riemann} (\alpha[f(a), f(b)]).
 \end{align*}
\end{proof}

\begin{corollary}[Reduction to Euclidean Distances on a Riemann Surface]
\label{cor:onecell}
Given three points $x$, $y$, and $p$ in $\R^d$ such that $p = \nn(x) = \nn(y)$,
the nearest neighbor geodesic $G$ from $x$ to $y$ satisfies the following properties:
\begin{enumerate}
\item
$G$ is in the plane determined by $x,y,p$.
\item
\begin{enumerate}
\item If the angle formed by $x,p,y$ is $\pi/2$ or more, then $G$ consists of
the two straight segments $\overline{xp}$ and $\overline{py}$.
\item Otherwise, $G$ is the preimage of the straight line from $f(x)$ to $f(y)$,
where $f$ is the quadratic map in the plane given by $x,y,p$ to the Riemann
Surface.
\end{enumerate}
\end{enumerate}
\end{corollary}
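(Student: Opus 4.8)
I would prove the three assertions in order: first that some nearest neighbor geodesic lies in the plane $H$ spanned by $x,y,p$, then transport the planar problem to the Riemann surface $\Riemann$ via the quadratic map and the preceding Lemma, and finally read off the shape of $G$ from the (flat, cone-type) intrinsic geometry of $\Riemann$.

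\textbf{Step 1: reduction to the plane.} Since $P=\{p\}$, the cost is $\dnn(z)=\dist(z,p)$, which is radially symmetric about $p$. Let $H$ be a plane through $p,x,y$ (if these are collinear, take any plane containing the line; that degenerate case is immediate), and let $\pi\colon\R^d\to H$ be the orthogonal projection. Then $\pi$ fixes $x,y,p$, is $1$-Lipschitz, and $\dnn(\pi(z))=|\pi(z)-p|\le|z-p|=\dnn(z)$ because $p\in H$. Hence for any $(x,y)$-curve $\gamma$ the projected curve $\pi\circ\gamma$ is again an $(x,y)$-curve, and
\[
\nnlen(\pi\circ\gamma)=\int \dnn(\pi(\gamma(t)))\,|(\pi\circ\gamma)'(t)|\,dt\ \le\ \int \dnn(\gamma(t))\,|\gamma'(t)|\,dt=\nnlen(\gamma).
\]
So $\nndist(x,y)$ is unchanged if we minimize only over curves in $H$, and any nearest neighbor geodesic may be taken inside $H$; this is item~1, and it lets us work entirely inside $H$.

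\textbf{Step 2: transport to $\Riemann$.} Identify $H$ with $\C$ by an isometry sending $p$ to $0$, write $x=r_xe^{i\theta_x}$, $y=r_ye^{i\theta_y}$, and let $f(z)=z^2/2\colon\C\to\Riemann$ be the quadratic map of the statement. The map $f$ is a \emph{bijection} onto the two-fold cover $\Riemann$ (doubling arguments carries $[0,2\pi)$ onto $[0,4\pi)$). By the preceding Lemma, $\nnlen(\gamma)=\len_{\Riemann}(f\circ\gamma)$ for every curve $\gamma$ in $\C$. Combining this with bijectivity, the $(x,y)$-curves in $\C$ correspond length-for-length (in the two metrics) to the $(f(x),f(y))$-curves on $\Riemann$; hence $\nndist(x,y)=\dist_{\Riemann}(f(x),f(y))$ and $G=f^{-1}(G_{\Riemann})$, where $G_{\Riemann}$ is any geodesic of $\Riemann$ from $f(x)$ to $f(y)$.

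\textbf{Step 3: geodesics on $\Riemann$.} Intrinsically $\Riemann$ is flat away from $0$ and is a cone of total angle $4\pi$ at the apex $0$. The points $f(x)=(r_x^2/2)e^{2i\theta_x}$ and $f(y)=(r_y^2/2)e^{2i\theta_y}$ lie at distances $r_x^2/2,\ r_y^2/2$ from the apex and subtend there the angle $\phi:=2\cdot\angle xpy\in[0,2\pi]$; note $\phi\le 4\pi-\phi$, so $\phi$ is already the minimal angular separation on the cone. I then invoke the standard description of geodesics on a cone of angle $\ge 2\pi$: if $\phi\ge\pi$, every curve on $\Riemann$ between the two points has length at least $r_x^2/2+r_y^2/2$, with equality for the broken path through $0$; if $\phi<\pi$, the two points lie in a wedge of opening less than $2\pi$ that develops isometrically onto a convex planar sector, inside which the straight chord (of length $\sqrt{(r_x^2/2)^2+(r_y^2/2)^2-2(r_x^2/2)(r_y^2/2)\cos\phi}$) avoids the apex and is the unique geodesic. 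Since $\phi\ge\pi\iff\angle xpy\ge\pi/2$, pulling back through $f^{-1}$ gives exactly item~2: $f^{-1}$ carries the apex ray $[0,f(x)]$ onto $\overline{px}$ (and likewise for $y$), so the through-apex geodesic pulls back to $\overline{xp}\cup\overline{py}$, and in the other case $G$ is by definition the preimage of $[f(x),f(y)]$.

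\textbf{Main obstacle.} The only nonroutine point is the cone-geometry dichotomy at $\phi=\pi$, specifically the lower bound that when $\phi\ge\pi$ \emph{every} path on $\Riemann$ between $f(x)$ and $f(y)$ — not just those through $0$ — has length at least $r_x^2/2+r_y^2/2$. I would prove this by cutting $\Riemann$ along a ray from $0$ disjoint from $f(x)$ and $f(y)$, developing the resulting wedge isometrically into the plane, and observing that, because the wedge has opening $\ge\pi$, the developed images of $f(x)$ and $f(y)$ are separated by the apex, so any planar straight segment between them passes through a copy of $0$; a short case analysis on which side of the cut a length-minimizing path last touches then yields the bound. Everything else in the argument is bookkeeping.
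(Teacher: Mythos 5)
Your proposal is correct and follows the paper's route in its core steps~2 and~3: transport the planar problem to the two-fold Riemann surface via $f(z)=z^2/2$, invoke the preceding Lemma's length correspondence $\nnlen(\gamma)=\len_\Riemann(f\circ\gamma)$, and read off the geodesic from the flat cone geometry of the cover, with the dichotomy at $\angle xpy=\pi/2$ coming from the angle doubling under $f$. The paper, however, offers the corollary with no standalone proof --- it is presented as an immediate consequence of the Lemma together with the informal paragraph describing geodesics on $\Riemann$. Your Step~1 (the orthogonal-projection argument showing that a nearest-neighbor geodesic can be taken inside the plane through $x,y,p$, using that $\pi$ is $1$-Lipschitz and $\dnn(\pi(z))\le\dnn(z)$) is a genuine addition: the Lemma is stated only for curves in $\C$, so without some planar reduction the corollary's ``in $\R^d$'' claim is not actually closed, and your contraction argument supplies the missing step cleanly. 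Similarly, your cut-and-develop sketch for the $\phi\ge\pi$ lower bound makes rigorous what the paper merely asserts about cone geodesics. So the approach is the same at heart, but you have filled two real gaps that the paper leaves implicit.
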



\subsection{Discretizing a Path}
\label{subsec:discretize}

First we approximate any $(x,y)$-path $\gamma$ with a piecewise linear $(x,y)$-path $\alpha$ that is composed of segments that are sufficiently long.
The properties of $\alpha$, formalized in the following definition, are used in the rest of this section to obtain a piecewise linear approximation using Steiner points.

\begin{definition}
Let $P = \{p_1, p_2, \cdots, p_n\}$ be a set of points in $\R^d$, and let $x, y \in
P$. Assume $\gamma$ is a path from $x$ to $y$ that is internally disjoint from $P$.  Also, let $0< \varepsilon_\alpha < 1$ be a real number.  A sequence
$\{0 = t_0, t_1, \dots, t_k, t_{k+1} = 1\}$ is a \emph{$\varepsilon_\alpha$-discretizing sequence} of $\gamma$ if it has the following
properties:
\begin{enumerate}
\item The nearest neighbor of both $\gamma(t_1)$ and $\gamma(t_2)$ is $x$,
while the nearest neighbor of both $\gamma(t_{k-1})$ and $\gamma(t_k)$ is $y$.
\item For all $1\leq i \leq k$, we have $\dist(\gamma(t_{i}), \gamma(t_{i+1})) = \varepsilon_\alpha\cdot \dnn(\gamma(t_{i}))$.
\end{enumerate}
In this case, the piecewise linear path $\alpha = \{x, \gamma(t_1), \dots, \gamma(t_k), y\}$ is a \emph{$\varepsilon_\alpha$-discretized path} of $\gamma$.
\end{definition}

The following lemma guarantees the existence of $\varepsilon_\alpha$-discretizing sequences and, hence, $\varepsilon_\alpha$-discretized paths.
Its proof is essentially the same as the proof of Lemma~\ref{lem:breakingSeqExist}.

\begin{lemma}
Let $P = \{p_1, p_2, \cdots, p_n\}$ be a set of points in $\R^d$, and let $x, y \in
P$. Let $\gamma$ be a path from $x$ to $y$ that is internally disjoint from $P$ and $0<\varepsilon_\alpha<1$ a real number.
There exists a $\varepsilon_\alpha$-discretizing sequence of $\gamma$.
\end{lemma}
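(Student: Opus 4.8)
The plan is to mimic the proof of Lemma~\ref{lem:breakingSeqExist}, constructing the sequence greedily from left to right and verifying the two properties that made that argument work: \textbf{(I)} each successive break point is well defined, and \textbf{(II)} the process halts after finitely many steps. The only genuinely new feature compared with a proper breaking sequence is that Property~1 now constrains \emph{two} consecutive break points at each end rather than one, so a little extra care is needed at the very first and very last steps; everything else is essentially a copy of the breaking-sequence proof with $\len(\gamma[t_{i-1},t_i])$ replaced by $\dist(\gamma(t_i),\gamma(t_{i+1}))$.

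First I would fix $t_1\in(0,1)$ small enough that $\gamma(t_1)$ lies in the open Voronoi cell of $x$. Since $\gamma(0)=x$ and $\dnn(\gamma(t_1))\le\dist(\gamma(t_1),x)\to 0$ as $t_1\to 0^+$, the forced next point $\gamma(t_2)$ — which lies within Euclidean distance $\varepsilon_\alpha\dnn(\gamma(t_1))\le\varepsilon_\alpha\dist(\gamma(t_1),x)$ of $\gamma(t_1)$, hence within distance $(1+\varepsilon_\alpha)\dist(\gamma(t_1),x)$ of $x$ — also lands in the cell of $x$ once $t_1$ is small, since that cell contains a ball about $x$. Then, inductively, given $t_i<1$ with $\gamma(t_i)\notin P$, I would take $t_{i+1}$ to be the smallest $t\in(t_i,1)$ with $\dist(\gamma(t_i),\gamma(t))=\varepsilon_\alpha\dnn(\gamma(t_i))$. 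Such a $t$ exists by the intermediate value theorem: the continuous map $t\mapsto\dist(\gamma(t_i),\gamma(t))$ vanishes at $t=t_i$, which is strictly below $\varepsilon_\alpha\dnn(\gamma(t_i))>0$, while at $t=1$ it equals $\dist(\gamma(t_i),y)\ge\dnn(\gamma(t_i))>\varepsilon_\alpha\dnn(\gamma(t_i))$ because $y\in P$ and $\varepsilon_\alpha<1$; this same strict inequality forces $t_{i+1}<1$, so the recursion never stalls.

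For termination I would reuse the compactness argument behind Property~II of Lemma~\ref{lem:breakingSeqExist}. Let $p(t)$ denote the Euclidean distance from $\gamma(t)$ to $P\setminus\{y\}$; it is continuous and strictly positive on $[t_1,1]$ (using internal disjointness for $t<1$ and $y\notin P\setminus\{y\}$ at $t=1$), hence bounded below by some $p_{\min}>0$ via the extreme value theorem. Whenever the nearest neighbor of $\gamma(t_i)$ is not $y$ we have $\dnn(\gamma(t_i))=p(t_i)\ge p_{\min}$, so $\len(\gamma[t_i,t_{i+1}])\ge\dist(\gamma(t_i),\gamma(t_{i+1}))=\varepsilon_\alpha\dnn(\gamma(t_i))\ge\varepsilon_\alpha p_{\min}$; since $\gamma$ has finite length, only finitely many break points can have a nearest neighbor other than $y$. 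Letting $j$ be the largest such index (it exists and is at least $1$, since $x\ne y$ forces $\gamma(t_1)$ and $\gamma(t_2)$ to have nearest neighbor $x$), every $\gamma(t_i)$ with $i>j$ has nearest neighbor $y$; setting $k:=j+2$ and $t_{k+1}:=1$ then yields a finite sequence for which $\gamma(t_{k-1})$ and $\gamma(t_k)$ both have nearest neighbor $y$, establishing Property~1, while Property~2 holds for every interior index by construction.

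I expect the only real obstacle to be the endpoint bookkeeping, i.e.\ the two places where this is not a verbatim copy of Lemma~\ref{lem:breakingSeqExist}: arguing that shrinking $t_1$ pulls \emph{both} $\gamma(t_1)$ and $\gamma(t_2)$ into the Voronoi cell of $x$, and arguing that the greedy process need not be carried out "all the way to $t=1$" (where $\dnn$ vanishes and the step condition degenerates) but can be stopped as soon as two consecutive break points lie in the cell of $y$. The existence of each individual step and the finiteness of the process go through exactly as before.
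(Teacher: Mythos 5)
Your proposal is correct and fills in exactly the construction the paper is gesturing at when it says "essentially the same as the proof of Lemma~\ref{lem:breakingSeqExist}": a greedy left-to-right choice of break points, existence of each step via the intermediate value theorem (using $\dist(\gamma(t_i),y)\ge\dnn(\gamma(t_i))>\varepsilon_\alpha\dnn(\gamma(t_i))$), and termination via the extreme-value-theorem lower bound on $\dist(\gamma(t),P\setminus\{y\})$. The only genuinely new bookkeeping — choosing $t_1$ small enough that the forced $\gamma(t_2)$ also lands in the in-ball of $x$'s Voronoi cell, and stopping once two consecutive break points lie in $y$'s cell — is handled correctly, and your reading of Property~2 as applying only to interior indices matches how the paper actually uses the definition in Lemma~\ref{lem:plApprox}.
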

Next, we show that the $\dnn$-length of a $\varepsilon_\alpha$-discretized path of $\gamma$ is not much longer than the $\dnn$-length of $\gamma$.

\begin{restatable}{lemma}{plApprox}
\label{lem:plApprox}
If $\alpha$ is a $\varepsilon_\alpha$-discretized path of $\gamma$ for some $\varepsilon_\alpha\leq 1/2$, then
$\nnlen(\alpha) \leq (1+4\varepsilon_\alpha)\nnlen(\gamma)$.
\end{restatable}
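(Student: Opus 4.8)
The plan is to bound the $\dnn$-length of $\alpha$ segment by segment, comparing $\nnlen$ of each linear piece $\overline{\gamma(t_i)\gamma(t_{i+1})}$ of $\alpha$ against $\nnlen$ of the corresponding sub-path $\gamma[t_i,t_{i+1}]$. Write $d_i = \dnn(\gamma(t_i))$; by property (2) of the discretizing sequence, the segment from $\gamma(t_i)$ to $\gamma(t_{i+1})$ has Euclidean length $\e_\alpha d_i$, and the arc $\gamma[t_i,t_{i+1}]$ has (Euclidean) length at least $\e_\alpha d_i$ as well. First I would get an upper bound on $\nnlen$ of the straight segment: for any point $s$ on $\overline{\gamma(t_i)\gamma(t_{i+1})}$ we have $\dnn(s) \le d_i + \dist(s,\gamma(t_i)) \le d_i + \e_\alpha d_i$ by the Lipschitz property of $\dnn$, so integrating over a segment of length $\e_\alpha d_i$ gives $\nnlen(\overline{\gamma(t_i)\gamma(t_{i+1})}) \le \e_\alpha d_i (1+\e_\alpha) d_i = (1+\e_\alpha)\e_\alpha d_i^2$. (A slightly sharper linear-in-$s$ integrand gives $(1 + \tfrac{\e_\alpha}{2})\e_\alpha d_i^2$, which may help tighten constants.)

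Next I would get a matching lower bound on $\nnlen(\gamma[t_i,t_{i+1}])$ using Lemma~\ref{lem:lipschitzUpperBound} with endpoint $x = \gamma(t_i)$: since $\len(\gamma[t_i,t_{i+1}]) = \e_\alpha d_i =: s$, that lemma gives $\nnlen(\gamma[t_i,t_{i+1}]) \ge s\, d_i - s^2/2 = \e_\alpha d_i^2 - \e_\alpha^2 d_i^2/2 = (1 - \e_\alpha/2)\e_\alpha d_i^2$. Combining the two bounds, the per-segment ratio is at most $\frac{1+\e_\alpha}{1-\e_\alpha/2}$ (or $\frac{1+\e_\alpha/2}{1-\e_\alpha/2}$ with the sharper numerator). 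A short estimate: for $\e_\alpha \le 1/2$, $\frac{1+\e_\alpha}{1-\e_\alpha/2} \le 1 + 4\e_\alpha$ — indeed this is equivalent to $1+\e_\alpha \le (1+4\e_\alpha)(1-\e_\alpha/2) = 1 + \tfrac72\e_\alpha - 2\e_\alpha^2$, i.e. $2\e_\alpha^2 \le \tfrac52\e_\alpha$, which holds for $\e_\alpha \le 5/4$. Since the bound $\nnlen(\overline{\cdot}) \le (1+4\e_\alpha)\nnlen(\gamma[t_i,t_{i+1}])$ holds on every segment, summing over $i$ and using additivity of $\nnlen$ along both $\alpha$ and $\gamma$ yields $\nnlen(\alpha) \le (1+4\e_\alpha)\nnlen(\gamma)$.

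One subtlety to handle carefully: the endpoint segments. Near $x$ and $y$, property (1) says several of the first/last sample points share the same nearest neighbor ($x$ resp.\ $y$), and $\gamma(t_0) = x$, $\gamma(t_{k+1}) = y$ are actual points of $P$ where $\dnn = 0$; the first segment $\overline{x\,\gamma(t_1)}$ is not governed by "$\dist = \e_\alpha \dnn$" in the same way (its length is $\dist(x,\gamma(t_1))$, which by Lipschitz is at most $\e_\alpha d_1/(1-\e_\alpha)$ or so relative to $d_1$, and the sub-arc $\gamma[t_0,t_1]$ has length at least $d_1 \ge \dist(x,\gamma(t_1))$ — wait, rather at least $\dnn(\gamma(t_1))$). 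I expect this boundary bookkeeping — making sure the first and last pieces also satisfy a per-segment ratio of $1+4\e_\alpha$, given that the discretizing sequence's property (2) runs $1 \le i \le k$ and the structure at the ends is a bit different — to be the main obstacle; everything in the interior is the routine Lipschitz-sandwich computation above. I would resolve it by noting that for the end segments $\dnn$ is even smaller (going to $0$), so the same Lipschitz bounds apply with room to spare, or alternatively by absorbing the two end segments into their neighbors.
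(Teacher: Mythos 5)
Your decomposition and the per-segment Lipschitz sandwich is the same strategy the paper uses, but the key lower-bound step has a real gap. Property (2) of a $\varepsilon_\alpha$-discretizing sequence fixes the \emph{chord} length, $\dist(\gamma(t_i),\gamma(t_{i+1})) = \varepsilon_\alpha d_i$, not the \emph{arc} length $\len(\gamma[t_i,t_{i+1}])$, which can be arbitrarily larger (and you note this yourself just before). When you then invoke Lemma~\ref{lem:lipschitzUpperBound} with ``$s := \len(\gamma[t_i,t_{i+1}]) = \varepsilon_\alpha d_i$,'' you are implicitly substituting the chord length for the arc length. This is not a harmless monotonicity shortcut: the bound $s\,d_i - s^2/2$ that the lemma supplies is a downward parabola, not monotone in $s$, so knowing only $s \ge \varepsilon_\alpha d_i$ does \emph{not} let you conclude $\nnlen(\gamma[t_i,t_{i+1}]) \ge \varepsilon_\alpha d_i^2 - \varepsilon_\alpha^2 d_i^2/2$ from that lemma alone. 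The conclusion happens to be true, but you need a different justification: either observe that the sharper Lipschitz integral $\int_0^{L}\max(0, d_i-t)\,dt$ \emph{is} monotone in $L$ (this requires restating or strengthening Lemma~\ref{lem:lipschitzUpperBound}, whose stated form discards the $\max$), or do what the paper does and invoke Lemma~\ref{lem:segUpperBound}, which sidesteps the issue entirely by intersecting $\gamma_i$ with the ball $B(\gamma(t_i), \varepsilon_\alpha d_i)$ and only charging the part of the arc guaranteed to have $\dnn \ge (1-\varepsilon_\alpha)d_i$.

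On the endpoint segments: your instinct that they are the subtle part is reasonable, but the paper disposes of them cleanly without any $\varepsilon_\alpha$-bookkeeping. Since $\nn(\gamma(t_1)) = x$ and Voronoi cells are convex, the straight segment $\overline{x\,\gamma(t_1)}$ is exactly the nearest-neighbor geodesic (this is the single-site case, cf.\ Corollary~\ref{cor:onecell}, or the one-ended Lipschitz bound $\dnn(\cdot)\ge \dnn(\gamma(t_1)) - t$ from the $\gamma(t_1)$ side), so $\nnlen(\alpha_0) \le \nnlen(\gamma_0)$ outright, and similarly at $y$. Your suggestion of a Lipschitz bound ``with room to spare'' is in the right direction but must be run from the $\gamma(t_1)$ end, since $\dnn(x)=0$ makes the bound from the $x$ end vacuous; and ``absorbing the end segments into their neighbors'' does not obviously work because the segment $\overline{x\,\gamma(t_1)}$ is not governed by the $\dist = \varepsilon_\alpha\,\dnn$ rule.
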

\begin{proof}
For $1\leq i\leq k-1$, let $\alpha_i$ be the line segment $(\gamma(t_i),\gamma(t_{i+1}))$.
Furthermore, let $\alpha_0$ and $\alpha_k$ be the line segments $(x,\gamma(t_1))$ and $(\gamma(t_k), y)$, respectively.
Also, let $\gamma_i = \gamma[t_i, t_{i+1}]$, for $1\leq i\leq k-1$, and $\gamma_0 = \gamma[0, t_1]$ and $\gamma_k = \gamma[t_k, 1]$.

We prove the stronger statement that $\nnlen(\alpha_i) \leq (1+4\varepsilon_\alpha)\nnlen(\gamma_i)$, for any $0\leq i\leq k$.
For $i \in \{0, k\}$, the straight line is indeed the shortest path in the nearest neighbor metric (since $\dnn(\gamma(t_1)) = \gamma(t_0) = x$ and $\dnn(\gamma(t_k)) = \gamma(t_{k+1}) = y$), and so, $\nnlen(\alpha_i) \leq \nnlen(\gamma_i)$.
For $1\leq i\leq k-1$, by Lemma~\ref{lem:segUpperBound},
\[
\nnlen(\alpha_i) \leq \nnlen(\gamma_i)\cdot\frac{\dnn(\gamma(t_i))+\len(\alpha_i)}{\dnn(\gamma(t_i))-\len(\alpha_i)}.
\]
Then, by the second property of $\varepsilon_\alpha$-discretizing sequences,
\[
\nnlen(\alpha_i) \leq \nnlen(\gamma_i)\cdot\frac{\dnn(\gamma(t_i))+\varepsilon_\alpha\dnn(\gamma(t_i))}{\dnn(\gamma(t_i))-\varepsilon_\alpha\dnn(\gamma(t_i))}.
\]
After simplifying and using the fact that $\varepsilon_\alpha \leq 1/2$, we obtain
\[
\nnlen(\alpha_i) \leq \nnlen(\gamma_i)\cdot\frac{1+\varepsilon_\alpha}{1-\varepsilon_\alpha} \leq \nnlen(\gamma_i)\cdot(1+4\varepsilon_\alpha).
\]
\end{proof}

\subsection{Approximating with Steiner Points}
\label{subsec:approxsteiner}


Assume $P\subset\R^d$, $x,y\in P$, and let $\gamma$ be an arbitrary $(x,y)$-path.
We show how to approximate $\gamma$ with a piecewise linear path through a collection of Steiner points in $\R^d$.
To obtain an accurate estimation of $\gamma$, we require the Steiner points to be sufficiently dense.
The following definition formalizes this density with a parameter $\delta$.

\begin{definition}[$\delta$-sample]
Let $P = \{p_1, p_2, \cdots, p_n\}$ be a set of points in $\R^d$, and let $D\subseteq \R^d$.
For a real number $0<\delta<1$, a \emph{$\delta$-sample} is a (possibly infinite) set of points $T\subseteq D$ such that if $z \in D\setminus P$, then
$\dist(z, T) \leq \delta\cdot \dnn(z)$.
\end{definition}

The following lemmas guarantees that an accurate estimation of $\gamma$ can be computed using a $\delta$-sample.  

\begin{lemma}
  \label{lem:segUpperBound}
  Let $P = \{p_1, p_2, \cdots, p_n\}$ be a set of points in $\R^d$, and let $x, y \in
  \R^d$. Let $\gamma$ be any path from $x$ to $y$, and $l$ be the straight $(x,y)$-segment.
  Assume further that $\dnn(x) > \len(l)$.
  Then,
  \[
    \nnlen(l) \leq \nnlen(\gamma)\cdot\frac{\dnn(x)+\len(l)}{\dnn(x)-\len(l)}
  \]
\end{lemma}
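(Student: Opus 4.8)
The plan is to bound the two $\dnn$-lengths separately, each in terms of $\dnn(x)$ and $s := \len(l) = \dist(x,y)$, and then take the quotient. If $s=0$ then $x=y$, $\nnlen(l)=0$, and the inequality is trivial, so assume $s>0$; note also that $\dnn(x)-s>0$ by hypothesis. For the straight segment, every point $z\in l$ has $\dist(x,z)\le s$, so the Lipschitz (triangle-inequality) property of the distance-to-$P$ function gives $\dnn(z)\le \dnn(x)+s$; integrating along $l$ yields $\nnlen(l)\le (\dnn(x)+s)\,s$ (one could save the usual $s^2/2$, but it is not needed here).

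For the lower bound on $\nnlen(\gamma)$, I would discard all of $\gamma$ except an initial sub-path that remains inside the ball $B=\{z:\dist(x,z)\le s\}$. Let $\tau$ be the first parameter at which $\gamma$ attains distance $s$ from $x$; such a $\tau$ exists because $\gamma(1)=y$ lies on the boundary of $B$, and by continuity $\dist(x,\gamma(\tau))=s$ while $\dist(x,\gamma(t))\le s$ for all $t\le\tau$. On this sub-path the Lipschitz property of $\dnn$ forces $\dnn(\gamma(t))\ge \dnn(x)-s>0$ throughout, and since $\gamma[0,\tau]$ joins $x$ to a point at Euclidean distance $s$, its Euclidean length is at least $s$. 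Hence
\[
\nnlen(\gamma)\ \ge\ \nnlen(\gamma[0,\tau])\ \ge\ (\dnn(x)-s)\cdot\len(\gamma[0,\tau])\ \ge\ (\dnn(x)-s)\,s,
\]
where if $\gamma$ (or $\gamma[0,\tau]$) fails to be rectifiable then $\nnlen(\gamma)=\infty$ and the claimed inequality holds trivially.

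Dividing the two bounds gives $\nnlen(l)/\nnlen(\gamma)\le (\dnn(x)+s)/(\dnn(x)-s)$, which is exactly $\frac{\dnn(x)+\len(l)}{\dnn(x)-\len(l)}$, as desired. The only place that needs care — and the one genuine obstacle — is this lower bound: one cannot apply Lemma~\ref{lem:lipschitzUpperBound} to all of $\gamma$ (whose length is unbounded), nor to $\gamma[0,\tau]$ via the estimate $s'\dnn(x)-s'^2/2$, since that sub-path's length $s'$ may exceed $2\dnn(x)$ and make the estimate vacuous. The resolution is the observation above: inside $B$ the cost $\dnn$ never drops below the constant $\dnn(x)-s$, which yields a clean lower bound linear in the (possibly large) length with no quadratic loss.
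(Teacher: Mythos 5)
Your proof is correct and follows essentially the same route as the paper's: both restrict attention to the ball $B(x,\len(l))$, use the Lipschitz property of $\dnn$ to bound the cost above by $\dnn(x)+\len(l)$ and below by $\dnn(x)-\len(l)$ on that ball, observe that the portion of $\gamma$ inside the ball has Euclidean length at least $\len(l)$, and divide. The only cosmetic difference is that you work with the initial sub-path $\gamma[0,\tau]$ up to first exit, whereas the paper works with the full intersection $\gamma\cap B(x,\len(l))$; since the intersection's length is bounded below by that of the first-exit piece, the two choices give the same bound.
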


\begin{proof}
  Let $\gamma' = \gamma \cap B(x, \len(l))$.
  Observe that $\gamma'$ is a finite collection paths, and let $\len(\gamma')$ be the total length of these paths.
  \begin{figure}[htb]
    \centering
      \includegraphics[height=1.25in]{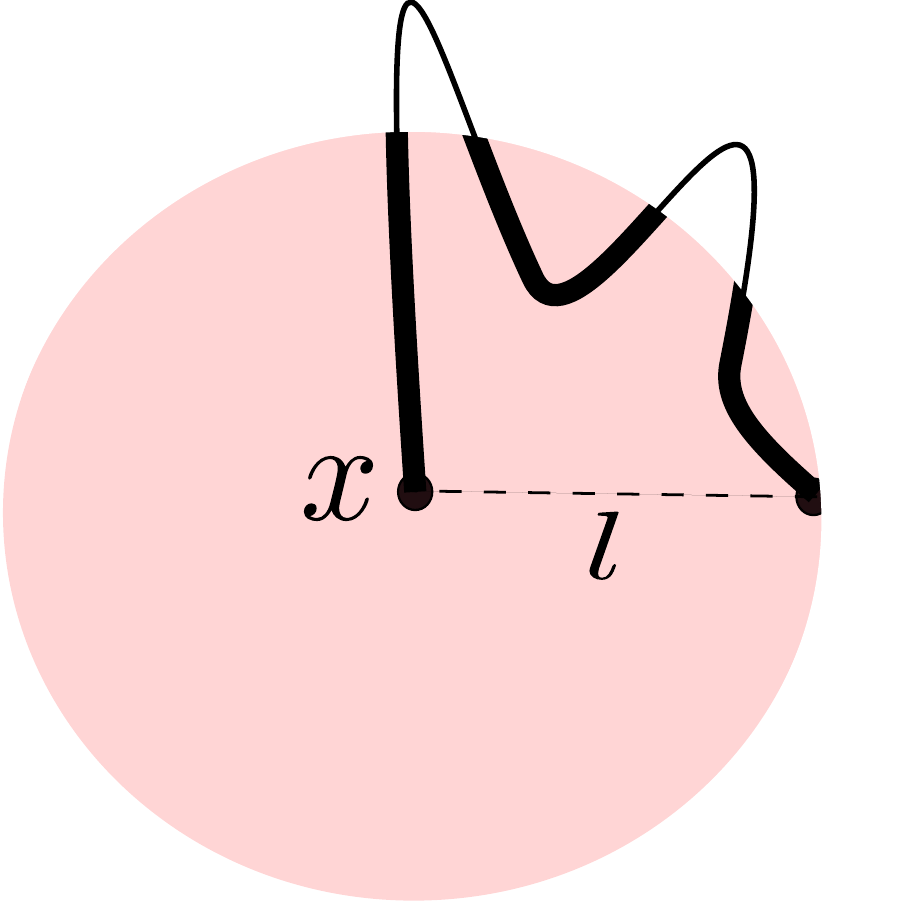}
        \caption{Proof of Lemma~\ref{lem:segUpperBound}; $\gamma'$ is bold.}
    \label{fig:drawing}
  \end{figure}
  For any point $z\in B(x, \len(l))$ we have $\dnn(x) - \len(l) \leq \dnn(z) \leq \dnn(x) + \len(l)$ because $\dnn(\cdot)$ is Lipschitz.
  Since $\gamma'$ and $l$ are both contained in $B(x, \len(l))$, in particular, we have:
  \[
    \nnlen(\gamma') \geq \len(\gamma')(\dnn(x) - \len(l))
  \]
  and
  \[
    \nnlen(l) \leq \len(l)(\dnn(x) + \len(l))\leq \len(\gamma')(\dnn(x) + \len(l)).
  \]
  Thus, since $\gamma' \subseteq \gamma$, we have $\nnlen(\gamma')\leq \nnlen(\gamma)$.  
  Finally, because both sides of the inequalities are positive numbers, we have:
  \[
    \frac{\nnlen(l)}{\nnlen(\gamma)}\leq\frac{\nnlen(l)}{\nnlen(\gamma')}\leq \frac{\dnn(x) + \len(l)}{\dnn(x) - \len(l)}.
  \]
\end{proof}

\begin{lemma}
  \label{lem:SDistGammaDelta}
  Let $P = \{p_1, p_2, \cdots, p_n\}$ be a set of points in $\R^d$, and let $S$ be a $\delta$-sample, and let $0< \delta<1/10$.
  Then, for any pair of points $x,y\in P$, there is a piecewise linear path $\eta = (x, s_1, \dots, s_k, y)$, where $s_1, \dots, s_k \in S$, such that:
  \[
  \nnlen(\eta) \leq (1+C_1\delta^{2/3}) \nndist(x,y),
  \]
  and, for all $1\leq i\leq k-1$,
  \[
  \nnlen((s_i, s_{i+1})) \leq C_2\cdot\delta^{2/3}\cdot\dnn(s_i).
  \]
  $C_1$ and $C_2$ are universal constants.
\end{lemma}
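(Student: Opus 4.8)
The plan is to start from a path $\gamma$ realizing (or nearly realizing) $\nndist(x,y)$, pass to an $\varepsilon_\alpha$-discretized path $\alpha$ via Lemma~\ref{lem:plApprox}, and then ``snap'' each vertex of $\alpha$ to a nearby point of the $\delta$-sample $S$. So first I would fix $\varepsilon_\alpha := \delta^{2/3}$ (this is the source of the exponent $2/3$ in the statement; we will see why below) and invoke Lemma~\ref{lem:plApprox} to get a piecewise linear $(x,y)$-path $\alpha = (x, a_1, \dots, a_m, y)$ with $\nnlen(\alpha) \le (1 + 4\delta^{2/3})\nndist(x,y)$, where each consecutive pair satisfies $\dist(a_i, a_{i+1}) = \delta^{2/3}\dnn(a_i)$ (the short-segment property). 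The point of discretizing first is that now every edge of $\alpha$ has length comparable to $\dnn$ at its endpoint, which is exactly what is needed to control the perturbation introduced by snapping.

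Next, for each interior vertex $a_i$ let $s_i \in S$ be a point with $\dist(a_i, s_i) \le \delta\,\dnn(a_i)$, which exists by the $\delta$-sample property (the endpoints $x,y \in P$ stay fixed). Define $\eta = (x, s_1, \dots, s_m, y)$. The two things to verify are the global length bound and the per-edge bound. For a single replaced edge $(a_i, a_{i+1}) \to (s_i, s_{i+1})$: the new segment has length at most $\dist(a_i,a_{i+1}) + \delta\dnn(a_i) + \delta\dnn(a_{i+1}) = O(\delta^{2/3})\dnn(a_i)$, using the short-segment property and the fact that $\dnn$ is Lipschitz so $\dnn(a_{i+1})$ and $\dnn(a_i)$ differ by a lower-order term. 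Along the segment $(s_i, s_{i+1})$ the value of $\dnn$ stays within a $(1 \pm O(\delta^{2/3}))$ factor of $\dnn(a_i)$ (again Lipschitz, since the whole segment lies in a ball of radius $O(\delta^{2/3})\dnn(a_i)$ around $a_i$, and here we use $\delta < 1/10$ so $\dnn(a_i) > 0$ is not swamped), so $\nnlen((s_i,s_{i+1})) \le (1+O(\delta^{2/3}))\dnn(a_i)\cdot O(\delta^{2/3})\dnn(a_i) = O(\delta^{2/3})\dnn(s_i)$, giving the second displayed inequality with $C_2$ absolute. For the global bound, I would compare $\nnlen((s_i,s_{i+1}))$ to $\nnlen(\alpha_i)$ where $\alpha_i = (a_i,a_{i+1})$: both integrals are over paths inside $B(a_i, O(\delta^{2/3})\dnn(a_i))$ where $\dnn \approx \dnn(a_i)$, so the ratio of lengths is at most the ratio of Euclidean lengths times $(1+O(\delta^{2/3}))$, and $\len((s_i,s_{i+1}))/\len(\alpha_i) \le 1 + O(\delta^{1/3})$ — wait, this is the delicate point.

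The main obstacle is precisely controlling the accumulated error in the global bound: a naive estimate gives $\len((s_i,s_{i+1})) \le \len(\alpha_i) + 2\delta\dnn(a_i) = \len(\alpha_i)(1 + 2\delta/\varepsilon_\alpha) = \len(\alpha_i)(1 + 2\delta^{1/3})$, and a factor $1+\delta^{1/3}$ per edge, summed with the $\nnlen$ weights, would give $(1 + O(\delta^{1/3}))\nnlen(\alpha)$ — the wrong exponent. The fix, and the reason for choosing $\varepsilon_\alpha = \delta^{2/3}$ rather than something larger, is that the additive error $\delta\dnn(a_i)$ per edge is an \emph{absolute} perturbation of size $\delta\dnn(a_i)$, and there are roughly $\len(\gamma_i)/\varepsilon_\alpha\dnn$-many... more carefully: one sums the additive errors $\sum_i 2\delta\,\dnn(a_i)$ and compares to $\nnlen(\gamma) \ge$ something like $\sum_i \varepsilon_\alpha \dnn(a_i)^2 / (\text{local length scale})$; balancing $\delta/\varepsilon_\alpha$ (the snapping relative error) against $\varepsilon_\alpha$ (the discretization relative error from Lemma~\ref{lem:plApprox}) to make both $O(\delta^{2/3})$ forces $\varepsilon_\alpha = \delta^{2/3}$. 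I would carry out this balancing explicitly: bound $\nnlen(\eta) - \nnlen(\alpha)$ by $\sum_i \bigl(\nnlen((s_i,s_{i+1})) - \nnlen(\alpha_i)\bigr)$, bound each term by $O(\delta)\dnn(a_i)\cdot\nnlen(\alpha_i)/(\varepsilon_\alpha\dnn(a_i)) = O(\delta/\varepsilon_\alpha)\nnlen(\alpha_i)$ since $\nnlen(\alpha_i) \approx \dnn(a_i)\len(\alpha_i) = \varepsilon_\alpha\dnn(a_i)^2$, sum to get $\nnlen(\eta) \le (1 + O(\delta^{1/3}))\nnlen(\alpha)$ — hmm, that still gives $\delta^{1/3}$. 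Let me reconsider: the additive snapping error on edge $i$ is $O(\delta\dnn(a_i))$ in \emph{Euclidean} length, so in $\dnn$-length it contributes $O(\delta\dnn(a_i))\cdot\dnn(a_i) = O(\delta\dnn(a_i)^2)$; meanwhile $\nnlen(\alpha_i) = \Theta(\varepsilon_\alpha\dnn(a_i)^2)$, so the relative error per edge is $O(\delta/\varepsilon_\alpha)$ and the total relative error is the max of this and $\varepsilon_\alpha$ from Lemma~\ref{lem:plApprox}, i.e. $\max(\delta/\varepsilon_\alpha, \varepsilon_\alpha)$, minimized at $\varepsilon_\alpha = \delta^{1/2}$ giving $\delta^{1/2}$. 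The exponent $2/3$ in the statement suggests the true accounting is subtler still — perhaps the snapping error must be charged against $\nnlen(\gamma)$ only over edges where $\dnn$ is \emph{decreasing}, or the short-segment length is $\varepsilon_\alpha\dnn$ at one endpoint while the relevant lower bound uses $\dnn$ at the other, introducing an extra $\varepsilon_\alpha$ factor — so I would track the telescoping of $\dnn(a_i)$ along the path and use Lemma~\ref{lem:lipschitzUpperBound}-style lower bounds $\nnlen(\gamma_i) \ge \len(\gamma_i)\dnn(a_i) - \len(\gamma_i)^2/2$ to recover the honest constant, and this bookkeeping is where I expect the real work to lie.
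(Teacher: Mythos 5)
Your opening move---discretize $\gamma$ with step $\varepsilon_\alpha = \delta^{2/3}$ via Lemma~\ref{lem:plApprox}, then pass to sample points from $S$---is the right frame, and you correctly sense that two error sources have to be balanced. But the obstruction you run into at the end is a real one, not a bookkeeping nuisance: snapping each \emph{vertex} $a_i$ of $\alpha$ to a nearby $s_i\in S$ perturbs the segment endpoints by $\delta\dnn(a_i)$ in an arbitrary direction, i.e.\ parallel to the segment to first order, so the relative length increase per edge is $O(\delta/\varepsilon_\alpha)$, and $\max(\varepsilon_\alpha,\,\delta/\varepsilon_\alpha)\ge\sqrt{\delta}$ for every choice of $\varepsilon_\alpha$. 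The exponent $2/3$ cannot be reached along that route, and your closing speculation about telescoping $\dnn$ or Lemma~\ref{lem:lipschitzUpperBound}-style lower bounds does not rescue it.

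The missing idea in the paper is to snap the \emph{midpoint} $m_i$ of each discretized segment $\alpha_i=(\gamma(t_i),\gamma(t_{i+1}))$ to its nearest sample point $s_i$ (so $\dist(m_i,s_i)\le\delta\dnn(m_i)$), first replace $\alpha_i$ by the two-segment detour $\beta_i=(\gamma(t_i),s_i,\gamma(t_{i+1}))$, and only then shortcut to $\eta_i=(s_i,s_{i+1})$ via Lemma~\ref{lem:segUpperBound}. Because $s_i$ sits near the midpoint, its offset from $\alpha_i$ is essentially \emph{perpendicular} to the segment, and the Pythagorean theorem makes the length increase quadratic in the offset:
\[
\len(\beta_i)\ \le\ 2\sqrt{\bigl(\tfrac{\len(\alpha_i)}{2}\bigr)^2+\bigl(\delta\dnn(m_i)\bigr)^2}\ =\ \len(\alpha_i)\cdot\Bigl(1+O\bigl((\delta/\varepsilon_\alpha)^2\bigr)\Bigr).
\]
Now the two errors to balance are $\varepsilon_\alpha$ (discretization) and $(\delta/\varepsilon_\alpha)^2$ (snapping), and equating them gives $\varepsilon_\alpha=\delta^{2/3}$ and an overall $(1+O(\delta^{2/3}))$ factor. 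The shortcutting step $\beta\to\eta$ only shortens Euclidean length, so Lemma~\ref{lem:segUpperBound} (applied with $x$ the endpoint of larger $\dnn$) again costs only $(1+O(\delta^{2/3}))$; the per-edge bound $\nnlen((s_i,s_{i+1}))\le C_2\delta^{2/3}\dnn(s_i)$ falls out from the triangle inequality plus the Lipschitz property of $\dnn$, as you sketched. Without the midpoint-plus-Pythagoras step your proof does not close, as you yourself observed.
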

\begin{proof}
Let $\gamma$ be an $(x,y)$-shortest path under the nearest neighbor metric.
We assume, without loss of generality, that $\gamma$ is internally disjoint from $P$.
Of course, we can break any path to such pieces and use induction to prove the lemma for the general case.

Let $\alpha = (x=\gamma(t_0),\gamma(t_1), \dots, \gamma(t_{k+1})=y)$ be a $(2\delta^{2/3})$-discretization of $\gamma$. 
For all $0\leq i\leq k$, let $\alpha_i = (\gamma(t_i), \gamma(t_{i+1}))$ for $0\leq i \leq k$, let $m_i$ be the midpoint of $\alpha_i$, and let $s_i \in S$ be the closest Steiner point to $m_i$. 
Then, let $\beta_i = (\gamma(t_i), s_i, \gamma(t_{i+1}))$ be a two-segment path, and let $\beta = \beta_0 \circ \beta_1 \circ \dots \circ\beta_k$. 
Finally, let $\eta = (x, s_1, s_2, \dots, s_{k-1}, y)$, and let $\eta_i = (s_i, s_{i+1})$ where $0\leq i<k$ and $s_0 = x$ and $s_k = y$ to simplify notation.

We need to prove that $\nnlen(\eta) \leq (1+C_1\delta^{2/3})\nnlen(\gamma)$.
Lemma~\ref{lem:plApprox} implies that
\[
\nnlen(\alpha) \leq (1+8\delta^{2/3})\nnlen(\gamma)
\]
Showing the following two facts imply the statement of the lemma.
\begin{eqnarray}
  \nnlen(\beta) \leq (1+20\delta^{2/3})\nnlen(\alpha)  \label{eqn:betaLEQAlpha} \\
  \nnlen(\eta) \leq (1+8\delta^{2/3})\nnlen(\beta)  \label{eqn:etaLEQBeta}
\end{eqnarray}

\begin{figure}[htb]
  \centering
    \includegraphics[height=1.1in]{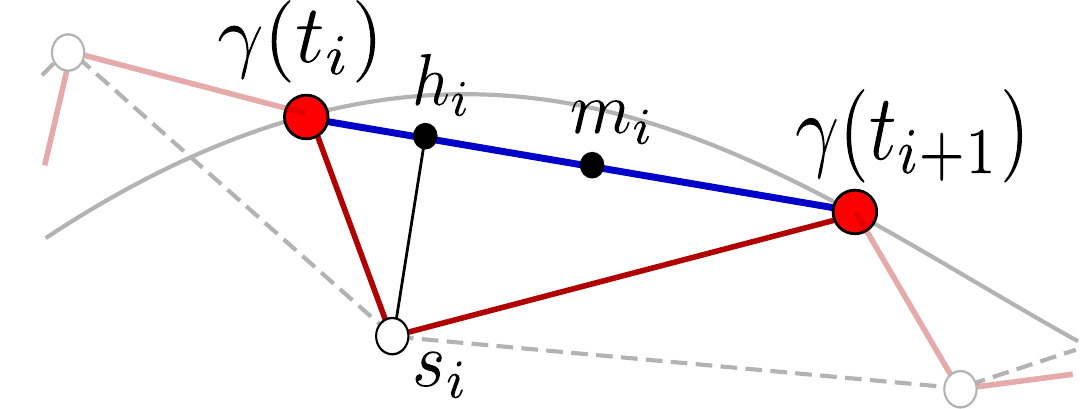}
      \caption{The lines segment $\alpha_i$ is blue, and the two-segment path $\beta_i$ is red.}
  \label{fig:epsApprox}
\end{figure}

\paragraph{A proof of Inequality~\eqref{eqn:betaLEQAlpha}:}
We prove the stronger statement that for all $0\leq i\leq k$, $\nnlen(\beta_i) \leq (1+8\delta^{2/3})\nnlen(\alpha_i)$.
Recall that $m_i$ is the midpoint of $\alpha_i$ and $s_i \in S$ is the closest Steiner point to $m_i$.  Let $h_i$ be the closes point of $\alpha_i$ to $s_i$; see Figure~\ref{fig:epsApprox}.

Because $\len(\alpha_i)\approx \delta^{2/3}\;\dnn(m_i)$, $\len((m_i, s_i))\approx \delta\;\dnn(m_i)$, and $\delta \leq \delta^{2/3}$, we have $(s_i, h_i)$ is orthogonal to $\alpha_i$.
Then, the following bound on the length of the piecewise linear
path $\beta_i = (\gamma(t_i), s_i, \gamma(t_{i+1}))$ is implied
by the Pythagorean theorem using derivative to maximize a function.
\begin{eqnarray}\label{eqn:betaLen}
\len(\beta_i) \leq 2\cdot \sqrt{\left(\frac{\len(\alpha_i)}{2}\right)^2 + (\len(s_i, h_i))^2}\leq
2\cdot \sqrt{\left({\delta^{2/3} \dnn(\gamma(t_i))}\right)^2 + \left(\delta\dnn(m_i)\right)^2}
\end{eqnarray}
Then, the Lipschitz property of the $\dnn$-function implies:
\[
\dnn(m_i) \leq \dnn(\gamma(t_i)) + \frac{\len(\alpha_i)}{2} \leq \dnn(\gamma(t_i)) + {\delta^{2/3}\cdot\dnn(\gamma(t_i))}
\leq (1+\delta^{2/3})\dnn(\gamma(t_i)).
\]
By substituting $(1+\delta^{2/3})\dnn(\gamma(t_i))$ for $\dnn(m_i)$ in \eqref{eqn:betaLen}, we obtain
\begin{eqnarray*}
\len(\beta_i) &\leq&
2\cdot \sqrt{\left({\delta^{2/3} \dnn(\gamma(t_i))}\right)^2 + \left(\delta(1+\delta^{2/3})\dnn(\gamma(t_i))\right)^2} \\
&=&
\delta^{2/3}\dnn(\gamma(t_i))\cdot \sqrt{1 + \left( {\delta^{1/3}(1+\delta^{2/3})} \right)^2} \\
&=&
\len(\alpha_i)\cdot \left(1 + 2\delta^{2/3}\right) \\
\end{eqnarray*}

On the other hand, because both $\alpha_i$ and $\beta_i$ are contained in $B(\gamma(t_i), \varepsilon_\alpha\dnn(\gamma(t_i)) )$, for any point $z \in \alpha_i \cup \beta_i$ we have $\dnn(\gamma(t_i))(1-2\delta^{2/3}) \leq \dnn(z) \leq \dnn(\gamma(t_i))(1+2\delta^{2/3})$ and so
\[
\nnlen(\beta_i) \leq \len(\beta_i)\cdot\dnn(\gamma(t_i))(1+2\delta^{2/3}).
\]
and
\[
\nnlen(\alpha_i) \geq \len(\alpha_i)\cdot\dnn(\gamma(t_i))(1-2\delta^{2/3}).
\]
Last three inequalities together imply
\begin{eqnarray*}
\frac{\nnlen(\beta_i)}{\nnlen(\alpha_i)} &\leq& \frac{1+2\delta^{2/3}}{1-2\delta^{2/3}} \cdot \left(1 + 2\delta^{2/3}\right) \\
&\leq& (1+20\delta^{2/3}).
\end{eqnarray*}
Thus, we obtain Inequality~\eqref{eqn:betaLEQAlpha}
by summing over all $\alpha_i$'s and $\beta_i$'s.

\paragraph{A proof of Inequality~\eqref{eqn:etaLEQBeta}:}
First, by the triangle inequality, we obtain:
\begin{eqnarray*}
\len(\eta_i) &\leq& \len(s_i, t_{i+1}) + \len(t_{i+1}, s_{i+1}) \\
&=& 2\delta^{2/3} \cdot \left(\frac{\dnn(\gamma(t_i))+\dnn(\gamma(t_{i+1}))}{2}\right) \\
&\leq& 2\delta^{2/3} \cdot \max\left(\dnn(\gamma(t_i)),\dnn(\gamma(t_{i+1}))\right)
\end{eqnarray*}
Then, Lemma~\ref{lem:segUpperBound}, letting $x$ be the endpoint with larger $\dnn$ value, implies
\[
\nnlen(\eta_i) \leq \nnlen(s_i, t_{i+1}, s_{i+1})\cdot \frac{1+2\delta^{2/3}}{1-2\delta^{2/3}} \leq \nnlen(s_i, t_{i+1}, s_{i+1})\cdot (1+8\delta^{2/3}).
\]
Finally, by summing over all $\eta_i$'s we obtain Inequality~\ref{eqn:etaLEQBeta}, and the proof is complete.
\end{proof}



\subsection{The Approximation Graph}
\label{subsec:approxgraph}

So far we have shown that any shortest path can be approximated using a $\delta$-sample that is composed of infinitely many points.
In addition, we know how to compute the exact $\dnn$-distance between any pair of points if they reside in the same Voronoi cell of $\vor(P)$.
Here, we combine these two ideas to be able to approximate any shortest path using only a finite number of Steiner points.
The high-level idea is to use the Steiner point approximation while $\gamma$ passes through regions that are far from $P$ and switch to the exact distance computation as soon as $\gamma$ is sufficiently close to one of the points in $P$.

\newcommand{\graphdist}{\mathbf{d}_{\mathcal{A}}}
\newcommand{\graphlen}{\mathbf{\ell}_{\mathcal{A}}}

Let $P = \{p_1, p_2, \cdots, p_n\}$ be a set of points in $\R^d$, and let $B$ be any convex body that contains $P$.
Fix $\delta \in (0,1)$, and for any $1\leq i\leq n$, let $r_i = r_P(p_i)$ be the inradius of the Voronoi cell with site $p_i$. Also, let $\srad_i = (1-\delta^{2/3})r_i$.
Finally, let $S$ be a $\delta$-sample on the domain $B\setminus {\bigcup_{1\leq i\leq n}{B(p_i, \srad_i)}}$.

\begin{definition}[Approximation Graph]
The approximation graph $\mathcal{A} = \mathcal{A}(P, \{\srad_1, \dots, \srad_n\}, S, \delta) = (V_\mathcal{A}, E_\mathcal{A})$ is a weighted undirected graph, with weight function $w:E_\mathcal{A} \rightarrow \R^+$.  The vertices in $V_\mathcal{A}$ are in one to one correspondence with the points in $S\cup P$; for simplicity we use the same notation to refer to corresponding elements in $S\cup P$ and $V_\mathcal{A}$.  The set $E_\mathcal{A}$ is composed of three types of edges:
\begin{enumerate}
\item If $s_1, s_2 \in S$ and $s_1, s_2 \in B(p_i, r_i)$ for any $p_i$, then $(s_1, s_2) \in E_\mathcal{A}$ and $w(s_1, s_2) = \nndist(s_1, s_2)$.  We compute this distance using Corollary~\ref{cor:onecell}.
\item Otherwise, if $s_1, s_2 \in S$ and $\len(s_1, s_2) \leq C_2\delta^{2/3}\max(\dnn(s_1), \dnn(s_2))$, where $C_2$ is the constant of Lemma~\ref{lem:SDistGammaDelta}, then $(s_1, s_2) \in E_{\mathcal{A}}$ and $w(s_1, s_2) = \max(\dnn(s_1), \dnn(s_2))\cdot\len(s_1, s_2)$.
\item If $s_1\in S$ and $s_1\in B(p_i, r_i)$
      then $(p_i, s_1) \in E_\mathcal{A}$ and $w(p_i, s_1) = \nndist(p_i, s_1) = (\dist(p_i, s_1))^2/2$; see \corref{cor:onecell}.
\end{enumerate}
For $x,y \in V_\mathcal{A}$ let $\graphdist(x,y)$ denote the length of the shortest path from $x$ to $y$ in the graph $\mathcal{A}$.
\end{definition}

The following lemma guarantees that the shortest paths in the approximation graph are sufficiently accurate estimations.  

\begin{restatable}{lemma}{approxGraph}
\label{lem:approxGraph}
Let $\{\srad_1, \dots, \srad_n\}$, $S$ and $\delta$ be defined as above.
Let $\mathcal{A}(P, \{\srad_1, \dots, \srad_n\}, S, \delta)$ be the approximation graph for $P$.
For any pair of points $x, y \in P$ we have:
\[
(1-C_2\delta^{2/3})\cdot\nndist(x,y) \leq \graphdist(x,y) \leq (1+C_4\delta^{2/3})\cdot\nndist(x,y),
\]
where $C_2$ and $C_4$ are constants computable in $O(1)$ time.
\end{restatable}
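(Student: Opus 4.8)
The plan is to prove the two inequalities separately, treating the approximation graph $\mathcal{A}$ as a discretized model of the continuous nearest neighbor metric, where the continuous geodesic is cut into an ``exact'' part (near the sites, inside the Voronoi in-balls $B(p_i, r_i)$) and an ``approximate'' part (far from the sites, handled by the $\delta$-sample $S$). For the upper bound, I would start from an optimal $\dnn$-geodesic $\gamma$ from $x$ to $y$, and break it at the moments it crosses the boundary spheres $\partial B(p_i, \srad_i)$ and $\partial B(p_i, r_i)$. On the subpaths lying outside all the balls $B(p_i, \srad_i)$, I would invoke Lemma~\ref{lem:SDistGammaDelta} to replace $\gamma$ by a polygonal path through Steiner points in $S$ whose consecutive segments satisfy $\len(s_i, s_{i+1}) \le C_2 \delta^{2/3} \max(\dnn(s_i), \dnn(s_{i+1}))$, hence are present as type-2 edges of $\mathcal{A}$ with weight at most $(1+O(\delta^{2/3}))\nnlen$ of the corresponding segment (since $\dnn$ varies by a $(1\pm O(\delta^{2/3}))$ factor over such a short segment). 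On the subpaths that dip inside some $B(p_i, r_i)$, I would use Corollary~\ref{cor:onecell}: the portion of the geodesic inside a single Voronoi cell is exactly a Riemann-surface geodesic, and the type-1 and type-3 edges of $\mathcal{A}$ record exactly these one-site distances, so here the discretization is lossless up to the cost of snapping entry/exit points to nearby Steiner points (again an $O(\delta^{2/3})$-relative error by the $\delta$-sample property and Lemma~\ref{lem:segUpperBound}). Summing the per-segment bounds gives $\graphdist(x,y) \le (1+C_4\delta^{2/3})\nndist(x,y)$.

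For the lower bound, I would go the other direction: take a shortest path in $\mathcal{A}$ realizing $\graphdist(x,y)$ and exhibit an actual $(x,y)$-curve in $\R^d$ whose $\dnn$-length is at most $\graphdist(x,y)/(1-C_2\delta^{2/3})$. Each type-1 and type-3 edge of $\mathcal{A}$ is assigned exactly the true $\dnn$-distance between its endpoints (via Corollary~\ref{cor:onecell}), so realize it by its actual nearest-neighbor geodesic. Each type-2 edge $(s_1,s_2)$ has weight $\max(\dnn(s_1),\dnn(s_2))\cdot\len(s_1,s_2)$; I would realize it by the straight segment $[s_1,s_2]$ and observe that, since $\len(s_1,s_2) \le C_2\delta^{2/3}\max(\dnn(s_1),\dnn(s_2))$ and $\dnn$ is $1$-Lipschitz, $\dnn$ along this segment is at least $(1-C_2\delta^{2/3})\max(\dnn(s_1),\dnn(s_2))$, so $\nnlen([s_1,s_2]) \ge (1-C_2\delta^{2/3})\,w(s_1,s_2)$. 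Concatenating these realizations yields a genuine curve from $x$ to $y$ with $\nndist(x,y) \le \nnlen(\text{curve}) \le \graphdist(x,y)/(1-C_2\delta^{2/3})$, i.e.\ $(1-C_2\delta^{2/3})\nndist(x,y) \le \graphdist(x,y)$.

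The main obstacle is the upper bound, specifically the bookkeeping at the interfaces between the ``far'' regime and the ``near'' regime: I must argue that an optimal continuous geodesic can be decomposed cleanly at the spheres $\partial B(p_i, \srad_i)$ and $\partial B(p_i, r_i)$, that the $\delta$-sample $S$ (which by construction covers $B \setminus \bigcup_i B(p_i, \srad_i)$, not the in-balls) is dense enough precisely where Lemma~\ref{lem:SDistGammaDelta} needs it, and that the Steiner points used to ``enter'' a Voronoi in-ball can be chosen inside $B(p_i, r_i)$ so that the type-1/type-3 edges are actually available in $\mathcal{A}$. The annular gap between radius $\srad_i = (1-\delta^{2/3})r_i$ and $r_i$ is exactly the slack that makes this possible: the $\delta$-sample extends out to radius $\srad_i$, while the exact-distance edges are defined out to radius $r_i$, so the two regimes overlap in a shell of relative width $\delta^{2/3}$, and any geodesic crossing from one regime to the other has a Steiner point available on both sides of the handoff. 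Verifying that the constants $C_1, C_2$ from Lemma~\ref{lem:SDistGammaDelta} and Lemma~\ref{lem:plApprox} compose into a single $C_4$, and that all the $(1+O(\delta^{2/3}))$ factors multiply to $1 + C_4\delta^{2/3}$ for $\delta$ below an absolute threshold, is then a routine (if tedious) computation.
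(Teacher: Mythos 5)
Your high-level plan is the same as the paper's: far from $P$, discretize via a $\delta$-sample and charge to type-2 edges; inside a Voronoi in-ball, use Corollary~\ref{cor:onecell} and charge to type-1/type-3 edges; realize a graph path by a curve for the converse direction. However, there are two genuine issues.

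First, in the lower bound your inequality points the wrong way. You realize a type-2 edge $(s_1,s_2)$ by the segment $[s_1,s_2]$ and conclude $\nnlen([s_1,s_2]) \ge (1-C_2\delta^{2/3})\,w(s_1,s_2)$ from the Lipschitz \emph{lower} bound on $\dnn$. But a lower bound on $\nnlen$ is useless for the claim $\nnlen(\text{curve}) \le \graphlen(\sigma)/(1-C_2\delta^{2/3})$; it does not follow from what you wrote. What you actually need is the Lipschitz \emph{upper} bound, namely that $\dnn(z) \le \max(\dnn(s_1),\dnn(s_2)) + \len(s_1,s_2) \le (1+C_2\delta^{2/3})\max(\dnn(s_1),\dnn(s_2))$ for $z$ on the segment, giving $\nnlen([s_1,s_2]) \le (1+C_2\delta^{2/3})\,w(s_1,s_2)$, and then $\nndist(x,y) \le (1+C_2\delta^{2/3})\graphlen(\sigma)$. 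The inequality you derived is the one the paper uses in the \emph{upper}-bound direction, to bound $w(s_1,s_2) \le (1+2C_2\delta^{2/3})\nnlen(s_1,s_2)$.

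Second, your upper-bound decomposition is not what the paper does, and as described it cannot quite use Lemma~\ref{lem:SDistGammaDelta}. You propose to cut the continuous geodesic $\gamma$ at the spheres $\partial B(p_i,\srad_i)$ and $\partial B(p_i,r_i)$ and then apply Lemma~\ref{lem:SDistGammaDelta} to the pieces lying outside the balls. But Lemma~\ref{lem:SDistGammaDelta} (and the discretization machinery it rests on, via Lemma~\ref{lem:plApprox} and the definition of an $\varepsilon_\alpha$-discretizing sequence) is stated and proved only for paths whose endpoints are sites of $P$: the first and last segments are required to lie entirely inside a Voronoi cell with endpoint at the site. A subpath of $\gamma$ between two sphere-crossing points has arbitrary endpoints and does not satisfy that hypothesis. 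The paper avoids this by extending $S$ to a $\delta$-sample $S^+$ of \emph{all} of $B$, applying Lemma~\ref{lem:SDistGammaDelta} once to get a polygonal path $\eta$ from $x$ to $y$ through $S^+$, and then post-processing: each maximal subpath of $\eta$ that lies inside $B(p_t,\srad_t/(1-\delta^{2/3}))$ is collapsed to a single type-1 (or type-3) edge, and the length constraint of type-2 edges together with the second bullet of Lemma~\ref{lem:SDistGammaDelta} guarantees that the two boundary vertices of such a subpath fall in $D$, so the type-1 edge is actually present in $\mathcal{A}$. You correctly flag this handoff as the hard part and correctly identify the annulus of relative width $\delta^{2/3}$ as the relevant slack, but your proposed decomposition still needs the $S^+$ device (or a re-proved version of Lemma~\ref{lem:SDistGammaDelta} for non-site endpoints) to go through.
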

\begin{proof}
First, we prove the upper bound.
Let $D = B\setminus {\bigcup_{1\leq i\leq n}{B(p_i, \srad_i)}}$,  and recall that $S$ is a $\delta$-sample on $D$.
Let $S^+$ be a $\delta$-sample of $B$ such that $S^+ \cap D = S$.
According to Lemma~\ref{lem:SDistGammaDelta}, there is a piecewise linear path $\eta = \{x, s_1, \dots, s_k, y\}$, for $s_1, \dots, s_k \in S^+$, such that $\nnlen(\eta) \leq (1+C_1\delta^{2/3})\nndist(x,y)$.
We use $\eta$ to obtain a $(x,y)$-path $\sigma$ in $\mathcal{A}$ with length at most $(1+2C_2\delta^{2/3})\nnlen(\eta)$; $C_2$ is the constant in Lemma~\ref{lem:SDistGammaDelta}.

Consider any maximal subpath $\eta' = (s_i, s_{i+1}, \dots, s_{j-1}, s_j)$ of $\eta$ that resides in $B\left(p_t, \srad_t/(1-\delta^{2/3})\right)$ for a $p_t \in P$.
The lengths of the type (2) edges and the second inequality of Lemma~\ref{lem:SDistGammaDelta} imply that $s_i, s_j \notin B(p_t, \srad_t)$, which in turn implies that $s_i, s_j \in D$.
Further, $(s_i, s_j)$ is a type (1) edge of $\mathcal{A}$ and $w(s_i, s_j) = \nnlen(s_i, s_j) \leq \nnlen(\eta')$.
So, $\eta'$ in $\eta$ corresponds to a single type (1) edge $(s_i, s_j)$ in $\sigma$ with no longer length.
In fact, all type (1) edges of $\sigma$ correspond to not shorter paths in $\eta$.

Similarly, we consider maximal paths $(p, s_i, s_{i+1}, \dots, s_{j-1}, s_j)$, where $p$ is an input point, and replacing them with type (3) edges of $\sigma$.  Again, all type (3) edges of $\sigma$ correspond to not shorter paths of $\eta$.

It remains to show that type (2) edges of $\sigma$ are not too long.
Observe, that each type (2) edge of $\sigma$ corresponds to a line segment of $\eta$.
Let $(s_1, s_2)\in E_\mathcal{A}$ be a type (2) edge and let $\max(\dnn(s_1), \dnn(s_2)) = \dnn(s_1)$ without loss of generality.
By the Lipschitz property and the length property of type (2) edges
\[
\nnlen(s_1, s_2) \geq (\dnn(s_1) - \len(s_1, s_2))\len(s_1, s_2) \geq (1 - C_2\delta^{2/3})\dnn(s_1)\len(s_1, s_2) = (1 - C_2\delta^{2/3})w(s_1, s_2),
\]
which implies
\[
w(s_1, s_2) \leq (1 + 2C_2\delta^{2/3})\nnlen(s_1, s_2),
\]
and so, by Lemma~\ref{lem:SDistGammaDelta},
\[
\graphlen(\sigma) \leq (1 + 2C_2\delta^{2/3})\nnlen(\eta) \leq (1 + C_4\delta^{2/3})\nndist(x,y).
\]

To prove the lower bound, let $x,y\in V_\mathcal{A}$ and let $\sigma = (x, s_1, \dots, s_k, y)$ be a path from $x$ to $y$ in $\mathcal{A}$.  Also, let $\gamma$ be the $(x,y)$ path in $\R^d$ that hits $(x = s_0, s_1, \dots, s_k, y = s_{k+1})$ in this order, and let the $(s_i, s_j)$ subpath of $\gamma$ be a line segment if $(s_i, s_j)$ is a type (2) edge in $\mathcal{A}$ and be the shortest $(s_i, s_j)$ nearest neighbor path otherwise. Then, let $\gamma_i$ be the subpath of $\gamma$ from $s_i$ to $s_{i+1}$.  We compare $w(s_i, s_{i+1})$ with $\nnlen(\gamma_i)$.

If $(s_i, s_{i+1})$ is type (1) or (3) then $\nnlen(\gamma_i) = w(s_1, s_2)$.  Otherwise, $(s_i, s_{i+1})$ is type (2) and by Lipschitz property
\[
\nnlen(s_1, s_2) \leq (\dnn(s_1) + \len(s_1, s_2))\len(s_1, s_2) \leq (1 + C_2\delta^{2/3})\dnn(s_1)\len(s_1, s_2) = (1 + C_2\delta^{2/3})w(s_1, s_2).
\]
Overall,
\[
\nndist(x, y) \leq \nnlen(\gamma) \leq (1 + C_2\delta^{2/3})\graphlen(\sigma),
\]
and the proof is complete.
\end{proof}

\subsection{Construction of Steiner Points}
\label{subsec:steinerconstruct}

The only remaining piece that we need to obtain an approximation scheme is an algorithm for computing a $\delta$-sample.
For this section, given a point set $T$ and $x\in T$, let $r_T(x)$ denote the inradius of the Voronoi cell of $\vor(T)$ that contains $x$.
Also, given a set $T$ and an arbitrary point $x$ (not necessarily in $T$), let $\lfs_T(x)$ denote the distance from $x$ to its \emph{second} nearest
neighbor in $T$.

We can apply existing algorithms for generating meshes and well-spaced points to compute a $\delta$-sample on $\mathcal{D}\setminus \bigcup_i B(p_i, \srad_i)$, where $\mathcal{D}\subseteq\R^d$ is a domain, and $\srad_i = (1-\delta^{2/3}) r_P(p_i)$.
The procedure consists of two steps:
\begin{enumerate}
\item Use the algorithm of \cite{miller2013WellSpaced} to construct a
well-spaced point set $M$ (along with its associated approximate Delaunay graph) with aspect ratio $\tau$ in
time $2^{O(d)} (n\log n + |M|)$.
\item Then over-refine $M$ to $S$ for the sizing function $g(x) =
\frac{2\delta}{11\tau} \lfs_P(x)$
(while maintaining aspect ratio $\tau$) in time $2^{O(d)} |S|$ by
using the algorithm of Section 3.7 in \cite{SheehyPHD}. (see also~\cite{hudson10topological} for an earlier use of this technique)
\end{enumerate}
In the above algorithm, we will choose $\tau$ to be a fixed constant, say, $\tau = 6$.
Both of the meshing algorithms listed above are chosen for their theoretical guarantees on running time.
In practice, one could use any quality Delaunay meshing algorithm, popular choices include Triangle~\cite{shewchuk96triangle} in $\R^2$ and Tetgen~\cite{siTetGen} or CGAL~\cite{cgalMesh3D} in $\R^3$.

From the guarantees in (\cite{SheehyPHD}), we know that
\begin{eqnarray}
|S| = O\left(\int_\mathcal{D} \frac{dx}{g(x)^d}\right) = \delta^{-O(d)} n\log\Delta, \label{eq:ssize}
\end{eqnarray}
where $\Delta$ is the \emph{spread} of $P$, i.e., the ratio of the largest distance between two points in $P$ to the smallest distance between two points in $P$.

\begin{lemma} \label{lem:lfsnn}
If $x \in \mathcal{D}\setminus B(p_i, \srad_i)$, where $p_i = \nn(x)$, then $\dnn(x) \leq f_P(x) \leq 5 \dnn(x)$.
\end{lemma}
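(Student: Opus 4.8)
I would prove the two inequalities separately, since they are of quite different character. The lower bound $\dnn(x)\le\lfs_P(x)$ is immediate from the definitions and uses none of the hypotheses: $\lfs_P(x)$ is the distance from $x$ to its \emph{second} nearest neighbor in $P$, so $\lfs_P(x)=\min_{q\in P\setminus\{\nn(x)\}}\dist(x,q)\ge\min_{q\in P}\dist(x,q)=\dnn(x)$.

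The content is the upper bound $\lfs_P(x)\le 5\dnn(x)$, and the single nontrivial ingredient is the elementary identity $r_P(p_i)=\tfrac12\min_{j\ne i}\dist(p_i,p_j)$ for the in-radius of a Voronoi cell. First I would record this: the Voronoi cell of $p_i$ is the intersection over $j\ne i$ of the half-spaces $\{z:\dist(z,p_i)\le\dist(z,p_j)\}$, and the bounding hyperplane of each such half-space (the perpendicular bisector of $p_i$ and $p_j$) lies at distance $\tfrac12\dist(p_i,p_j)$ from $p_i$; hence the largest ball centered at $p_i$ contained in every such half-space, and therefore in the cell, has radius exactly $\tfrac12\min_{j\ne i}\dist(p_i,p_j)$. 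In particular, writing $r_i:=r_P(p_i)$, there is a site $p_j\ne p_i$ with $\dist(p_i,p_j)=2r_i$.

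With this in hand the rest is two applications of the triangle inequality. Since $p_i=\nn(x)$ we have $\dnn(x)=\dist(x,p_i)$, and the hypothesis $x\notin B(p_i,\srad_i)$ together with $\srad_i=(1-\delta^{2/3})r_i$ gives $\dnn(x)\ge(1-\delta^{2/3})r_i$, i.e. $r_i\le\dnn(x)/(1-\delta^{2/3})$. Then, because $p_j\in P\setminus\{p_i\}$,
\[
\lfs_P(x)\le\dist(x,p_j)\le\dist(x,p_i)+\dist(p_i,p_j)=\dnn(x)+2r_i\le\Bigl(1+\frac{2}{1-\delta^{2/3}}\Bigr)\dnn(x),
\]
and whenever $\delta^{2/3}\le\tfrac12$ — which holds for the small values of $\delta$ used here, e.g. $\delta<1/10$ — the parenthesized factor is at most $5$, which is the claimed bound.

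I do not expect any real obstacle; once the in-radius identity is stated, the argument is two lines. The only subtlety worth a remark is that Voronoi cells may be unbounded, but the statement is meaningful only when $|P|\ge 2$ (otherwise the second nearest neighbor, and hence $\lfs_P$, is undefined), and then the cell of $p_i$ has at least one bounding bisector, so the in-radius identity and the whole argument go through.
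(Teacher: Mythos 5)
Your proof is correct and follows essentially the same route as the paper's: bound $\lfs_P(x)\le\dist(x,p_j)$ for the nearest neighbor $p_j$ of $p_i$ in $P\setminus\{p_i\}$, use the in-radius identity $\dist(p_i,p_j)=2r_P(p_i)$, and then the hypothesis $x\notin B(p_i,\srad_i)$ to convert $r_P(p_i)$ into a multiple of $\dnn(x)$. The only differences are cosmetic: you make the in-radius identity explicit, and you correctly write the final step as an inequality $1+\tfrac{2}{1-\delta^{2/3}}\le 5$ valid for $\delta^{2/3}\le\tfrac12$, whereas the paper's proof asserts an equality there that only holds at $\delta^{2/3}=\tfrac12$ (a small slip in the paper; the intended conclusion $\lfs_P(x)\le 5\dnn(x)$ is of course the same).
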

\begin{proof}
Note that $N(x) \leq f_P(x)$ is trivial. To prove the second half of the inequality, let $q$ be the closest point to $p$ in $P\setminus\{p_i\}$. Then, by the triangle inequality, we have that
\begin{eqnarray*}
	f_P(x) &\leq& \dist(x,q)\\
	&\leq& \dist(x,p_i) + \dist(p_i,q)\\
	&\leq& \dist(x,p_i) + 2 r_P(p_i)\\
	&\leq& \dist(x,p_i) + \frac{2}{1-\delta^{2/3}} \dist(x,p_i)\\
	&=& 5 \dist(x,p_i) \\
	&=& 5\dnn(x).
\end{eqnarray*}
\end{proof}

Now, it remains to show that the point set $S$ is indeed a $\delta$-sample on $\mathcal{D}\setminus \bigcup_i B(p_i, \srad_i)$. This is provided by the
following lemma.
\begin{restatable}{lemma}{deltasample}
$S$ is a $\delta$-sample on $\mathcal{D}\setminus\bigcup_i B(p_i, \srad_i)$.
\end{restatable}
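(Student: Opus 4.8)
The plan is to show that every point $x$ in the domain $D := \mathcal{D}\setminus\bigcup_i B(p_i,\srad_i)$ has a sample point of $S$ within distance $\delta\cdot\dnn(x)$. The over-refinement in step~2 of the construction guarantees that $S$ is a mesh whose spacing at $x$ is controlled by the sizing function $g(x) = \frac{2\delta}{11\tau}\lfs_P(x)$. Concretely, the standard guarantee for over-refined well-spaced point sets (from \cite{SheehyPHD}) is that for every point $x$ in the domain there is a sample point $s\in S$ with $\dist(x,s) = O(\tau\cdot g(x))$; I would first pin down the exact constant hidden in that $O(\cdot)$ for the meshing guarantee being invoked. Assuming the standard bound $\dist(x,S)\le \frac{11\tau}{2}\,g(x)$ (which is precisely why the constant $\frac{2\delta}{11\tau}$ was chosen), we immediately get $\dist(x,S)\le \delta\cdot\lfs_P(x)$.

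Next I would convert the bound in terms of $\lfs_P$ into a bound in terms of $\dnn$. This is exactly what Lemma~\ref{lem:lfsnn} provides, but note a subtlety: the definition of a $\delta$-sample requires $\dist(z,S)\le\delta\cdot\dnn(z)$ for $z\in D\setminus P$, whereas Lemma~\ref{lem:lfsnn} only says $\lfs_P(x)\le 5\dnn(x)$, which would give the weaker bound $\dist(x,S)\le 5\delta\cdot\dnn(x)$. So I would either (a) absorb the factor of $5$ by running the construction with $\delta' = \delta/5$ in place of $\delta$ — i.e. use the sizing function $g(x) = \frac{2\delta}{55\tau}\lfs_P(x)$ — or (b) check whether the paper intends the constant in the $\delta$-sample definition to be loose enough that a multiplicative factor is harmless downstream (it appears in the exponent as $\delta^{-O(d)}$ and in the error terms as $\delta^{2/3}$, both of which tolerate constant-factor rescaling of $\delta$). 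I expect the cleanest writeup is to note that Lemma~\ref{lem:lfsnn} shows $g(x)\le \frac{2\delta}{11\tau}\cdot 5\dnn(x) = \frac{10\delta}{11\tau}\dnn(x)$, hence $\dist(x,S)\le \frac{11\tau}{2}g(x)\le 5\delta\cdot\dnn(x)$, and then absorb the $5$ into the choice of $\delta$ — or simply carry it, since all later lemmas are stated up to universal constants.

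The remaining routine steps: verify that $x\in D\setminus P$ forces $p_i = \nn(x)$ to satisfy $x\notin B(p_i,\srad_i)$ (immediate from the definition of $D$), so that the hypothesis of Lemma~\ref{lem:lfsnn} is met and the lower bound $\dnn(x)\le \lfs_P(x)$ also holds; confirm that the over-refinement preserves the domain $D$ (the algorithm of Section~3.7 of \cite{SheehyPHD} refines within the given domain without shrinking it); and confirm the aspect-ratio-$\tau$ hypothesis is maintained so the meshing guarantee applies with the stated constant. The main obstacle — really the only non-bookkeeping point — is matching the precise packing/covering constant from the cited meshing algorithm to the constant $\frac{2\delta}{11\tau}$ baked into $g$, and making sure the factor-of-$5$ from $\lfs_P$ versus $\dnn$ is handled consistently; everything else is a direct substitution. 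I would therefore structure the proof as: (1) recall the covering guarantee of the over-refined mesh, (2) apply it with $g(x) = \frac{2\delta}{11\tau}\lfs_P(x)$ to get $\dist(x,S)\le\delta\lfs_P(x)$ up to the algorithm's constant, (3) apply Lemma~\ref{lem:lfsnn} to replace $\lfs_P(x)$ by a constant times $\dnn(x)$, (4) conclude $S$ is a $\delta$-sample (after the harmless rescaling of $\delta$), and note the size bound \eqref{eq:ssize} follows from the same source.
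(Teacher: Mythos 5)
Your proposal has a genuine gap at its foundation: you assume a covering guarantee of the form $\dist(x,S)\le \frac{11\tau}{2}g(x)$, i.e., a bound in terms of the sizing function \emph{at the query point $x$}. That is not what the meshing guarantees provide. The guarantee you actually get from the aspect-ratio condition plus over-refinement is a two-step bound at the \emph{nearest sample} $q$: first $\dist(x,q)\le \tau\cdot r_S(q)$ (aspect ratio $\tau$ bounds the out-radius of $q$'s Voronoi cell by $\tau$ times its in-radius), and second $r_S(q)\le g(q)=\frac{2\delta}{11\tau}\lfs_P(q)$ (over-refinement respects the sizing function). This gives $\dist(x,q)\le\frac{2\delta}{11}\lfs_P(q)$, and since $\lfs_P$ is evaluated at $q$ rather than $x$, you need the $1$-Lipschitz property of $\lfs_P$ to write $\lfs_P(q)\le\lfs_P(x)+\dist(x,q)$ and then solve the resulting self-referential inequality, which yields $\dist(x,q)\le\frac{2\delta}{11-2\delta}\lfs_P(x)$. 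Applying Lemma~\ref{lem:lfsnn} then gives $\dist(x,q)\le\frac{10\delta}{11-2\delta}\dnn(x)\le\delta\dnn(x)$.

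As a consequence, your worry about the factor of $5$ from Lemma~\ref{lem:lfsnn}, and your proposed fixes (rescaling $\delta$ to $\delta/5$, or just carrying a constant), are off the mark. The constant $\frac{2}{11}$ in the sizing function was chosen precisely to absorb the factor of $5$ from $\lfs_P(x)\le 5\dnn(x)$ \emph{and} the $\frac{11}{11-2\delta}$ blow-up from the Lipschitz self-referential step, leaving $\frac{10\delta}{11-2\delta}\le\delta$ (valid for $\delta\le 1/2$). There is nothing to rescale and no slack factor to carry; the definition of a $\delta$-sample is met exactly. The missing idea in your write-up is the combination of the aspect-ratio bound at $q$ with the Lipschitz continuity of $\lfs_P$; without it you end up bounding $\dist(x,S)$ in terms of $g(x)$, which requires a covering constant that the cited meshing results do not supply, and which then leaves you (incorrectly) appearing to need a factor-of-$5$ correction.
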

\begin{proof}
Let $x\in\mathcal{D}\setminus\bigcup_i B(p_i, u_i)$ be an arbitrary point. Then, let $q$ be the closest point to $x$ that lies in $S$. Note that
\begin{eqnarray*}
\dist(x,q) &\leq& \tau\cdot r_S(q)\\
&\leq& \tau\left(\frac{2\delta}{11\tau}\lfs_P(q)\right)\\
&\leq& \frac{2\delta}{11}(\lfs_P(x) + \dist(x,q)).
\end{eqnarray*}
This implies that
\begin{eqnarray*}
\dist(x,q) &\leq& \frac{2\delta}{11-2\delta} \lfs_P(x)\\
&\leq& \frac{10\delta}{11-2\delta} \dnn(x)\\
&\leq& \delta\dnn(x),
\end{eqnarray*}
as desired.
\end{proof}

Now, we calculate the number of edges that will be present in the approximation graph defined in the previous section. For this, we require
a few lemmas.
\begin{lemma}
Let $A = B(p_i, r_P(p_i))\setminus B(p_i, \srad_i)$ be an annulus around $p_i$. Then, $|A\cap S| = \delta^{-O(d)}$.
\end{lemma}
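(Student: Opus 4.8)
The plan is to bound the number of Steiner points of $S$ lying in the thin annulus $A = B(p_i, r_P(p_i)) \setminus B(p_i, \srad_i)$ by a packing argument, using the sizing function that governs the over-refinement. First I would observe that every point $x \in A$ satisfies $\srad_i \le \dist(x, p_i) \le r_P(p_i)$, so the width of the annulus is $r_P(p_i) - \srad_i = \delta^{2/3} r_P(p_i)$. Since $A \subseteq \mathcal{D} \setminus B(p_i, \srad_i)$ and $p_i = \nn(x)$ for $x \in A$ (as $A$ lies well inside the Voronoi cell of $p_i$), Lemma~\ref{lem:lfsnn} applies and gives $\dnn(x) \le \lfs_P(x) \le 5\dnn(x)$; moreover $\dnn(x) = \dist(x,p_i) \le r_P(p_i)$ and $\dnn(x) \ge \srad_i = (1-\delta^{2/3}) r_P(p_i) \ge \tfrac12 r_P(p_i)$. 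Thus on $A$ the function $\dnn$, and hence $\lfs_P$, is within a constant factor of $r_P(p_i)$ everywhere.

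Next I would use the guarantees of the over-refined mesh. The sizing function used in the construction is $g(x) = \frac{2\delta}{11\tau}\lfs_P(x)$, and the well-spaced output $S$ has the property that distinct points of $S$ are pairwise separated by $\Omega(g(x))$ (up to the aspect ratio constant $\tau$, which is fixed). So any two points of $S \cap A$ are at mutual distance at least some constant times $\delta \cdot r_P(p_i)$. Therefore $S \cap A$ is a packing of points with pairwise separation $\Omega(\delta r_P(p_i))$ inside a set contained in $B(p_i, r_P(p_i))$. A standard volume (packing) bound in $\R^d$ then gives $|S \cap A| = O\!\left(\left(\frac{r_P(p_i)}{\delta\, r_P(p_i)}\right)^{d}\right) = \delta^{-O(d)}$, where the hidden constants absorb $\tau$ and the dimension-dependent packing constant.

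A cleaner alternative, which I would actually prefer to write, is to invoke directly the size estimate \eqref{eq:ssize}: $|S| = O\!\left(\int_{\mathcal{D}} g(x)^{-d}\,dx\right)$, specialized to the region $A$. Since $g(x) = \Theta(\delta\, r_P(p_i))$ throughout $A$ and $\vol(A) \le \vol(B(p_i, r_P(p_i))) = O(r_P(p_i)^d)$, we get
\[
|S \cap A| = O\!\left(\int_A \frac{dx}{g(x)^d}\right) = O\!\left(\frac{r_P(p_i)^d}{(\delta\, r_P(p_i))^d}\right) = \delta^{-O(d)},
\]
independent of $i$ and of $r_P(p_i)$.

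The main obstacle is making rigorous the claim that $\lfs_P$ (equivalently $g$) is within a constant factor of $r_P(p_i)$ uniformly over $A$: the lower bound $\lfs_P(x) \ge \dnn(x) \ge \srad_i$ is immediate, but the upper bound $\lfs_P(x) \le 5\dnn(x) \le 5 r_P(p_i)$ requires that $A$ actually lies in the region covered by Lemma~\ref{lem:lfsnn}, i.e.\ that $\nn(x) = p_i$ for all $x \in A$ and that $x \notin B(p_i,\srad_i)$ — both of which hold by construction since the in-ball $B(p_i, r_P(p_i))$ of the Voronoi cell contains $A$. Once that two-sided bound on the sizing function is in hand, the packing/volume estimate is routine.
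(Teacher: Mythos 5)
Your proposal is correct and matches the paper's own proof, which also observes that the meshing guarantee forces pairwise separation $\Omega(r_S(s)) = \Omega(\delta \cdot r_P(p_i))$ among points of $S$ in the annulus $A \subseteq B(p_i, r_P(p_i))$, and then concludes by a sphere-packing bound. Your extra care in verifying that $A$ lies inside the Voronoi cell of $p_i$ (so Lemma~\ref{lem:lfsnn} applies and $\lfs_P = \Theta(r_P(p_i))$ on $A$), and your alternative derivation via the integral size bound \eqref{eq:ssize}, are just expansions of the same underlying estimates the paper leaves implicit.
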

\begin{proof}
By the meshing guarantees of \cite{SheehyPHD}, we know that for any point $s\in A\cap S$, $B(s, t)$ does not contain a point from $S\setminus\{s\}$ for $t = \Omega(r_S(s)) = \Omega(\delta\cdot r_P(p))$. Thus, the desired result follows using a simple sphere packing argument.
\end{proof}

\begin{lemma}
If $s\in S$, then $|B(s, C_2\delta^{2/3} \dnn(s)) \cap S| = \delta^{-O(d)}$, where $C_2$ is the constant in Lemma~\ref{lem:SDistGammaDelta}.
\end{lemma}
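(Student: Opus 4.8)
The plan is to mirror the proof of the preceding (annulus) lemma and reduce the claim to a sphere-packing estimate. The two ingredients are a uniform lower bound on the pairwise distances of the points of $S$ that lie inside $B(s, C_2\delta^{2/3}\dnn(s))$, and the observation that the radius of this ball exceeds that lower bound by only a $\delta^{-1/3}$ factor.

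First I would invoke the over-refinement guarantees of~\cite{SheehyPHD}: since $S$ is obtained by over-refining a $\tau$-well-spaced set for the sizing function $g(x) = \frac{2\delta}{11\tau}\lfs_P(x)$ with $\tau$ a fixed constant, every point $s' \in S$ has nearest-neighbor distance $\Omega(g(s')) = \Omega(\delta\,\lfs_P(s'))$. Combining this with Lemma~\ref{lem:lfsnn} (which gives $\lfs_P(s') \ge \dnn(s')$), any two distinct $s_1, s_2 \in S$ satisfy $\dist(s_1,s_2) = \Omega(\delta\,\dnn(s_1))$. This is the same separation fact used in the annulus lemma, just expressed through $\lfs_P$ instead of $r_P$.

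Next I would restrict attention to the points $s' \in B(s, C_2\delta^{2/3}\dnn(s)) \cap S$. We may assume $\delta$ is small enough that $C_2\delta^{2/3} \le \frac12$ (otherwise the bound $\delta^{-O(d)}$ is vacuous), so the Lipschitz property of $\dnn$ gives $\dnn(s') \ge (1 - C_2\delta^{2/3})\dnn(s) \ge \frac12\dnn(s)$ for each such $s'$. Hence these points are pairwise at distance $\Omega(\delta\,\dnn(s))$, yet all of them sit inside a single ball of radius $C_2\delta^{2/3}\dnn(s)$. A standard volume/packing argument then bounds their number by
\[
\left(\frac{O(C_2\,\delta^{2/3}\dnn(s))}{\Omega(\delta\,\dnn(s))}\right)^{d} = \left(O(\delta^{-1/3})\right)^{O(d)} = \delta^{-O(d)},
\]
as claimed.

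The Lipschitz estimate and the packing count are routine; the one step I would write out most carefully — and the only place that needs attention — is the separation bound, namely extracting from the over-refinement guarantees of~\cite{SheehyPHD} that the nearest-neighbor distance of $S$ at $s'$ is $\Omega(g(s'))$ with a constant independent of $\delta$ and of $s'$, and verifying that this remains valid for points of $S$ lying near the boundary of the truncated domain $\mathcal{D}\setminus\bigcup_i B(p_i,\srad_i)$.
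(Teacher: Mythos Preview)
Your proposal is correct and follows essentially the same approach as the paper's proof: use the meshing guarantees to get a separation $\Omega(\delta\,\dnn(s'))$ between points of $S$, invoke the Lipschitz property of $\dnn$ to turn this into a uniform $\Omega(\delta\,\dnn(s))$ separation inside the ball, and finish with a sphere-packing count. Your version is simply more explicit (routing through $\lfs_P$ and Lemma~\ref{lem:lfsnn}, and spelling out the Lipschitz step and the packing ratio), whereas the paper compresses all of this into one sentence.
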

\begin{proof}
As in the previous lemma, meshing guarantees tell us that for any $s'\in B(s, C_2\delta^{2/3} \dnn(s))$, we have that $B(s', t)$ does not contain a point from $S\setminus\{s'\}$ for $t = \Omega(\delta\cdot \dnn(s')) = \Omega(\delta\cdot \dnn(s))$. Thus, we again obtain the desired result from a sphere packing argument.
\end{proof}
From the above lemmas, we see that $\mathcal{A}$ is composed of $|S| = \delta^{-O(d)} n\log\Delta$ vertices and
$n\delta^{-O(d)} + |S|\cdot \delta^{-O(d)} = |S|\cdot\delta^{-O(d)}$ edges.

\noindent\textbf{Remark}.
Note that the right hand side of \eqref{eq:ssize} is in terms of the spread, a non-combinatorial quantity. Indeed, one can construct examples of $P$ for which the integral in \eqref{eq:ssize} is not bounded from above by any function of $n$. However, for many classes of inputs, one can obtain a tighter analysis. In particular, if $P$ satisfies a property known as \emph{well-paced}, one can show that the resulting set $S$ will satisfy $|S| = 2^{O(d)} n$ (see~\cite{miller08linear,sheehy12new}).

In a more general setting (without requiring that $P$ is well-paced), one can modify the algorithms to produce output in the form of a \emph{hierarchical mesh}~\cite{miller11beating}.
This then produces an output of size $2^{O(d)} n$, and $(1+\varepsilon)$-approximation algorithm for the nearest neighbor metric can be suitably modified so that the underlying approximation graph uses a hierarchical set of points instead of a full $\delta$-sample.
However, we ignore the details here for the sake of simplicity of exposition.

The above remark, along with the edge count of $\mathcal{A}$ and the running time guarantees from~\cite{miller2013WellSpaced}, yields Theorem~\ref{thm:ptas}, the main theorem of this section.
 
\section{Discussion}
\label{sec:questions}

Motivated by estimating geodesic distances within subsets of $\R^n$, we consider
two distance metrics in this paper: the $N$-distance and the edge-squared
distance.  The main focus of this paper is to find an approximation of the
$N$-distance.  One possible drawback of our $(1+\varepsilon)$-approximation algorithm is its exponential dependency on $d$.
To alleviate this dependency a natural approach is using a Johnson-Lindenstrauss type projection.
Thereby, we would like to ask which properties are preserved under random projections 
such as those in Johnson-Lindenstrauss transforms.


We are currently working on implementing the approximation algorithm presented
in \secref{sec:approximation}.  We hope to show that this approximation is
fast in practice as well as in theory.

\section*{Acknowledgement}
The authors would like to thank Larry Wasserman for helpful discussions.
\bibliographystyle{alpha}
\bibliography{../bibliography}

\newcommand{\etalchar}[1]{$^{#1}$}
\begin{thebibliography}{DBVKOS00}

\bibitem[AG95]{alt1995computing}
Helmut Alt and Michael Godau.
\newblock Computing the {F}r{\'e}chet distance between two polygonal curves.
\newblock {\em International Journal of Computational Geometry \&
  Applications}, 5(01n02):75--91, 1995.

\bibitem[ALMS98]{aleksandrov98wrsp}
Lyudmil Aleksandrov, Mark Lanthier, Anil Maheshwari, and J\"{o}rg-R\"{u}diger
  Sack.
\newblock An epsilon-approximation for weighted shortest paths on polyhedral
  surfaces.
\newblock In {\em 6th WS on Algorithm Theory}, pages 11--22, London, UK, UK,
  1998.

\bibitem[AMS00]{aleksandrov00wrsp}
Lyudmil Aleksandrov, Anil Maheshwari, and J\"{o}rg-R\"{u}diger Sack.
\newblock Approximation algorithms for geometric shortest path problems.
\newblock In {\em Proceedings of the thirty-second annual ACM symposium on
  Theory of computing}, STOC '00, pages 286--295, New York, NY, USA, 2000. ACM.

\bibitem[AMS05]{aleksandrov05wrsp}
L.~Aleksandrov, A.~Maheshwari, and J.-R. Sack.
\newblock Determining approximate shortest paths on weighted polyhedral
  surfaces.
\newblock {\em J. ACM}, 52(1):25--53, January 2005.

\bibitem[ART{\etalchar{+}}12]{cgalMesh3D}
Pierre Alliez, Laurent Rineau, St{\'e}phane Tayeb, Jane Tournois, and Mariette
  Yvinec.
\newblock {3D} mesh generation.
\newblock In {\em {CGAL} User and Reference Manual}. {CGAL Editorial Board},
  {4.1} edition, 2012.

\bibitem[BCH04]{bousquet04measurebased}
Olivier Bousquet, Olivier Chapelle, and Matthias Hein.
\newblock Measure based regularization.
\newblock In {\em 16th NIPS}, 2004.

\bibitem[Ber96]{bernoulli}
Johann Bernoulli.
\newblock Branchistochrone problem.
\newblock {\em Acta Eruditorum}, June 1696.

\bibitem[BRS11]{bijral11semiSupLearningDBD}
Avleen~Singh Bijral, Nathan~D. Ratliff, and Nathan Srebro.
\newblock Semi-supervised learning with density based distances.
\newblock In Fabio~Gagliardi Cozman and Avi Pfeffer, editors, {\em UAI}, pages
  43--50. AUAI Press, 2011.

\bibitem[DBVKOS00]{deberg2000computational}
Mark De~Berg, Marc Van~Kreveld, Mark Overmars, and Otfried~Cheong Schwarzkopf.
\newblock {\em {Computational Geometry}}.
\newblock Springer, 2000.

\bibitem[HDI14]{hwang12spthroughsamples}
Sung~Jin Hwang, Steven~B. Damelin, and Alfred O.~Hero III.
\newblock Shortest path through random points.
\newblock 2014.
\newblock arXiv/1202.0045v3.

\bibitem[HOMS10]{hudson10topological}
Beno\^{i}t Hudson, Steve~Y. Oudot, Gary~L. Miller, and Donald~R. Sheehy.
\newblock Topological inference via meshing.
\newblock In {\em SOCG: Proceedings of the 26th ACM Symposium on Computational
  Geometry}, 2010.

\bibitem[Hp11]{Har-peled11Book}
Sariel Har-peled.
\newblock {\em Geometric Approximation Algorithms}.
\newblock American Mathematical Society, Boston, MA, USA, 2011.

\bibitem[Kat10]{kotak10hyperbolic}
Svetlana Katok.
\newblock Fuchsian groups, geodesic flows on surfaces of constant negative
  curvature and symbolic coding of geodesics.
\newblock \url{http://www.personal.psu.edu/sxk37/cmi.pdf}, 2010.

\bibitem[KH03]{kim03shortestPathUAV}
J.~Kim and J.P. Hespanha.
\newblock Discrete approximations to continuous shortest-path: Application to
  minimum-risk path planning for groups of uavs.
\newblock In {\em 42nd IEEE ICDC}, Jan 2003.

\bibitem[LSV06]{lukovszki06resource}
Tamás Lukovszki, Christian Schindelhauer, and Klaus Volbert.
\newblock Resource efficient maintenance of wireless network topologies.
\newblock {\em Journal of Universal Computer Science}, 12(9):1292--1311, 2006.

\bibitem[MP91]{mitchell91weightedRegion}
Joseph S.~B. Mitchell and Christos~H. Papadimitriou.
\newblock The weighted region problem: finding shortest paths through a
  weighted planar subdivision.
\newblock {\em J. ACM}, 38(1):18--73, January 1991.

\bibitem[MPS08]{miller08linear}
Gary~L. Miller, Todd Phillips, and Donald~R. Sheehy.
\newblock Linear-size meshes.
\newblock In {\em CCCG: Canadian Conference in Computational Geometry}, 2008.

\bibitem[MPS11]{miller11beating}
Gary~L. Miller, Todd Phillips, and Donald~R. Sheehy.
\newblock Beating the spread: Time-optimal point meshing.
\newblock In {\em SOCG: Proceedings of the 27th ACM Symposium on Computational
  Geometry}, 2011.

\bibitem[MSV13]{miller2013WellSpaced}
Gary~L. Miller, Donald~R. Sheehy, and Ameya Velingker.
\newblock A fast algorithm for well-spaced points and approximate delaunay
  graphs.
\newblock In {\em 29th SOCG}, SoCG '13, pages 289--298, New York, NY, USA,
  2013. ACM.

\bibitem[RR90]{row90gridFreePathPlan}
Neil Rowe and Ron Ross.
\newblock Optimal grid-free path planning across arbitrarily contoured terrain
  with anisotropic friction and gravity effects.
\newblock pages 540--553, 1990.

\bibitem[RS00]{reif00wrsp}
John Reif and Zheng Sun.
\newblock An efficient approximation algorithm for weighted region shortest
  path problem.
\newblock In {\em Proceedings of the 4th Workshop on Algorithmic Foundations of
  Robotics}, WAFR '00, pages 191--203, 2000.

\bibitem[She96]{shewchuk96triangle}
Jonathan~Richard Shewchuk.
\newblock Triangle: Engineering a 2{D} quality mesh generator and {D}elaunay
  triangulator.
\newblock In {\em Applied Computational Geometry}, volume 1148 of {\em Lecture
  Notes in Computer Science}, pages 203--222, 1996.

\bibitem[She11]{SheehyPHD}
Donald Sheehy.
\newblock {\em Mesh Generation and Geometric Persistent Homology}.
\newblock PhD thesis, Carnegie Mellon University, Pittsburgh, July 2011.
\newblock CMU CS Tech Report CMU-CS-11-121.

\bibitem[She12]{sheehy12new}
Donald~R. Sheehy.
\newblock {New Bounds on the Size of Optimal Meshes}.
\newblock {\em Computer Graphics Forum}, 31(5):1627--1635, 2012.

\bibitem[Si11]{siTetGen}
Hang Si.
\newblock Tet{G}en: A quality tetrahedral mesh generator and a {3D} {D}elaunay
  triangulator.
\newblock \url{http://tetgen.org/}, January 2011.

\bibitem[SO05]{sajama05estimatingDBDM}
Sajama and Alon Orlitsky.
\newblock Estimating and computing density based distance metrics.
\newblock In {\em ICML '05}, pages 760--767, New York, NY, USA, 2005. ACM.

\bibitem[Str]{calVarStrain}
John Strain.
\newblock Calculus of variation.
\newblock \url{http://math.berkeley.edu/~strain/170.S13/cov.pdf}.

\bibitem[Tsi95]{tsitsiklis95optimalTrajectories}
John~N. Tsitsiklis.
\newblock Efficient algorithms for globally optimal trajectories.
\newblock {\em IEEE Transactions on Automatic Control}, 40:1528--1538, 1995.

\bibitem[VB03]{vincent04densitySens}
Pascal Vincent and Yoshua Bengio.
\newblock Density sensitive metrics and kernels.
\newblock In {\em Snowbird Workshop}, 2003.

\end{thebibliography}

\clearpage

\end{document}